\newcommand{\tr}{\mathrm{tr}}
\newcommand{\dt}{\,{dt}}
\newcommand{\dq}{\,{dq}}
\newcommand{\dtprime}{\,{dt'}}
\newcommand{\norm}[1]{\left\lVert#1\right\rVert}
\newtheorem{proposition}{Proposition}
\newtheorem{lemma}{Lemma}
\newtheorem{assumption}{Assumption}
\def\Vec{\mathop{vec}\nolimits}
\title{\LARGE \bf
Multi-Agent Persistent Monitoring of Targets with Uncertain States
}
\author{Samuel C. Pinto, {\it Student Member, IEEE}, Sean B. Andersson, {\it Senior Member, IEEE}, Julien M. Hendrickx, {\it Member, IEEE} and Christos G. Cassandras, {\it Fellow, IEEE}
\thanks{The authors are with the Dept. of Mechanical Engineering (Pinto and Andersson), the Division of Systems Engineering (Andersson, Cassandras) and the Dept. of Electrical and Computer Engineering (Cassandras), Boston University, Boston, MA, 02215, USA. (e-mail: \{samcerq,sanderss,cgc\}@bu.edu)

Julien M. Hendrickx is with the ICTEAM Institute, UCLoucain, Louvain-la-Neuve 1348, Belgium. (e-mail: julien.hendrickx@uclouvain.be)

This work was supported in part by NSF under grants ECCS-1931600, DMS-1664644, CNS-1645681, and CMMI-1562031, by ARPA-E's NEXTCAR program under grant DE-AR0000796, by AFOSR under grant FA9550-19-1-0158,  and by the MathWorks. The work of J. Hendrickx was supported by the “RevealFlight” Concerted Research Action (ARC) of the Federation Wallonie-Bruxelles, by the Incentive Grant for Scientific Research (MIS) “Learning from Pairwise Data” of the F.R.S.-FNRS, and by a WBI World Excellence Fellowship.
}
}
\begin{document}

\maketitle
\thispagestyle{empty}
\pagestyle{empty}

\begin{abstract}                
We address the problem of persistent monitoring, where a finite set of mobile agents has to persistently visit a finite set of targets. Each of these targets has an internal state that evolves with linear stochastic dynamics. The agents can observe these states, and the observation quality is a function of the distance between the agent and a given target. The goal is then to minimize the mean squared estimation error of these target states. We approach the problem from an infinite horizon perspective, where we prove that, under some natural assumptions, the covariance matrix of each target converges to a limit cycle. The goal, therefore, becomes to minimize the steady state uncertainty. Assuming that the trajectory is parameterized, we provide tools for computing the steady state cost gradient. We show that, in one-dimensional (1D) environments with bounded control and non-overlapping targets, when an optimal control exists it can be represented using a finite number of parameters. We also propose an efficient parameterization of the agent trajectories for multidimensional settings using Fourier curves. Simulation results show the efficacy of the proposed technique in 1D, 2D and 3D scenarios.
\end{abstract}


\section{Introduction}

We consider the problem of multi-agent persistent monitoring. This problem consists of using a finite set of agents to monitor a finite set of targets, more numerous than agents, which have internal states that evolve over time with dynamics subject to uncertainty. Therefore, as time goes to infinity, in order to keep the uncertainty under control, the targets need to be visited only a finite number of times, but persistently. The goal is to minimize the long-term uncertainty by designing movement policies that produce the best estimate possible of the target states. This paradigm finds applications across a wide range of domains, such as trajectory planning of underwater vehicles to measure ocean temperature \cite{lan2013planning,lan2014variational,Alam:2018ie}, surveillance in smart cities \cite{kim2018designing} and tracking of multiple microparticles by an optical microscope \cite{shen2010tracking}. 

This problem is closely related to the Multi Traveling Salesman Problem (MTSP) \cite{BEKTAS2006209} and Multi-Vehicle Routing Problem (MVRP) \cite{laporte2009fifty}, where, given a set of targets (possibly constrained to a graph-based structure), the goal is to find a cycle in which the agents efficiently visit all the targets in order to minimize the traveled distance or total travel time. These problems are proved to be computationally intractable (NP-hard) and most of the scalable solutions to these problems rely either on local optimization or heuristics \cite{pasqualetti2012cooperative,BEKTAS2006209,laporte2009fifty}. The major difference between the MTSP and MVRP and the problem we are dealing with in this paper is that the optimization goal we consider is to minimize the uncertainty rather than distance or time between two consecutive observations of a given target. The present work is also closely related to the sensor allocation problem \cite{le2010scheduling}, where a set of sensors can observe a set of targets, but due to various constraints not all the targets can be observed at the same time and, therefore, some of the sensors have to switch among the targets they observe. The sensor allocation problem, however, assumes that the sensors are fixed and therefore does not incorporate the effect of the agent movement (i.e. the mobile sensors) in the formulation.

In the realm of persistent monitoring, significant previous work has been done. In \cite{lan2013planning}, a variant of the Rapid-Exploring Random Tree (RRT) algorithm was designed for cyclic Persistent Monitoring and \cite{lan2014variational} introduced an optimal control approach that relied on a solution of the two-point boundary value problem resulting from a Hamiltonian analysis. Note that the solution of the two-point value problem is numerically challenging and computationally expensive. In \cite{jones2015information} the persistent monitoring problem is formulated using temporal logic to encode target visiting constraints rather than solving an optimization problem.

The present paper also builds up from significant previous work by the authors \cite{cassandras2013optimal,zhou2018optimal,Yu:2018cf}, where the problem of persistent monitoring was modelled using an uncertainty metric for each of the targets that either grew linearly with time when the agent was not observed or decreased linearly when an agent visited it. A common feature among these previous works and the present paper is the focus on scalable solutions with respect to the number of agents, targets and time horizon. Therefore, instead of looking for globally optimal visiting schedules, we use a local optimization scheme (gradient descent) even though the obtained solution is not guaranteed to be globally optimal. One big challenge in order to use a gradient descent approach is to efficiently compute the gradients of the cost with respect to the parameters that define the trajectory.

The current work, unlike some previous work by the authors, considers each target as having an internal state that evolves with linear stochastic dynamics that can be observed with a linear observation model. The signal to noise ratio of the observation is a function of the distance between the agent and the target. In this setting, the optimal estimator can be proven to be a Kalman-Bucy filter and the mean estimation error is directly related to the covariance matrix of this filter. The main contribution of this paper is to provide tools to efficiently represent and optimize the schedules for agents visiting targets. If we consider finite horizon schedules, as time grows to infinity, the number of parameters to represent a trajectory also tends to grow infinitely large. We, however, restrict ourselves to a periodic trajectory and approach the problem from an infinite horizon perspective. We show that under some very natural assumptions the estimation error converges to a limit cycle and we provide tools for optimizing one period of the limit cycle trajectory, which usually is represented by only a small number of parameters.

Although the analysis introduced in this paper is independent of the particular parameterization chosen for the trajectory, we discuss two parameterizations that are particularly interesting. When the targets and agents are constrained to lie in a one-dimensional environment, we show that, under some assumptions, an optimal control can always be represented by a trajectory in which the agent is either moving with full speed or dwelling at a fixed position. This allows optimal trajectories to be described as a finite sequence of movement times and dwelling times, yielding a parameterization. On the other hand, when the agents and targets operate in a higher dimensional space (e.g. 2D and 3D), we cannot immediately extend such properties of an optimal control. We then parameterize the trajectories using Fourier curves, where the movement of an agent in each of the coordinates is described by a truncated Fourier series. Fourier curves are interesting because they are able to describe very general smooth movement policies with only a very small number of coefficients.

Recalling the goal of performing local optimization using a gradient descent scheme, it is particularly important to provide good initial solutions for the optimization. We thus connect the persistent monitoring problem with the MTSP and use a heuristic solution to the MTSP as a basis for the initial trajectory of the agents in the optimization scheme. We benefit from the fact that efficient heuristic solutions of the MTSP are well studied in the scientific literature and that they always provide an initial trajectory where all the targets are visited. This is a very important feature for persistent monitoring, since it prevents the uncertainty of each target from becoming infinitely large.


Preliminary results of this work have appeared in previous publications. In \cite{pinto2019monitoring}, the target internal state dynamics and observation models, as well as the 1D transient analysis were introduced. The computation of steady state gradients and infinite horizon analysis was first introduced in \cite{pinto2020monitoring}. The extension to multi dimensional environments using Fourier curves was initially presented in \cite{pinto2020multidimensional}. However, the approach described in our previous works was heavily dependent on the specific parameterization and in the present work we formulate the problem in a general framework that does not rely on the specific parameterization. Moreover, in Sec. \ref{sec:transient_opt} we provide a proof that guarantees the convergence and uniqueness of the steady state covariance matrices and also we show the soundness of our method to compute the steady state gradients. On top of that, we provide a stronger claim than we did in \cite{pinto2019monitoring} about an optimal parameterization of 1D trajectories. {Previously, we were only able to show that it can be parameterized, but now we provide an explicit bound on the number of parameters}. Moreover, in Sec. \ref{sec:fourier_curves} we include simulation results that significantly add to the results of our previous work \cite{pinto2020multidimensional}. 

The rest of this paper is organized as follows. Section \ref{sec:prob_formulation} describes the  models used for the agents and the target internal states, along with a formulation of the optimal joint control and estimation problem. Section \ref{sec:transient_opt} presents results on the convergence of the covariance matrix and the optimization procedure is given for the periodic, infinite horizon case. Section \ref{sec:parameterizations} introduces the 1D parameterization, along with its properties, optimization initialization and some simulation results. In Section \ref{sec:fourier_curves}, some features of the previous section are extended to higher dimensions using Fourier curves and 1D, 2D and 3D results are presented.  Finally Section \ref{sec:conclusion} gives a conclusion and shares ideas for future works.

\section{Problem Formulation}
\label{sec:prob_formulation}

Consider an environment with a set of $M$ points of interest (targets) at fixed positions $x_i\in \mathbb{R}^P$, $i=1,...,M$. Each of these targets has an internal state $\phi_i \in \mathbb{R}^{L_i}$ that needs to be monitored and that evolves according to linear time-invariant stochastic dynamics:
\begin{equation}
    \label{eq:dynamics_phi}
    {\dot{\phi}}_i(t) = A_i{\phi}_i(t) + {w}_i(t),
\end{equation}
where $w_i(t)$ is a white noise process distributed according to $w_i(t) \sim \mathcal{N}(0,Q_i),$ $i=1,\dots, M,$ and $w_i(t)$ and $w_j(t)$ are statistically independent if $i \neq j$.

Suppose that there is a collection of $N$ mobile agents at positions $s_i(t)\in \mathbb{R}^P$ that can move with the following kinematic model:
\begin{equation}
    \dot{s}_j(t)=u_j(t),\qquad u_j(t)\in\mathcal{U},\qquad j=1,...,N, \label{eq:dynamics_agents}
\end{equation}
where $u_j$ is an input, and $\mathcal{U}$ is the set of admissible inputs. Even though we assume, for the sake of simplicity, first order dynamics and that only the speed may be bounded, the results in this paper could be extended to more complex dynamics and constraints. For example, \cite{wang2018etal} explored similar results in a simplified version of the persistent monitoring problem, considering double integrator agent dynamics with constraints both on the speed and the acceleration.

Each of these agents is equipped with sensors that can observe the targets according to the following model:
\begin{equation}
    \label{eq:observation_model_ij}
    {z}_{i,j}(t)=\gamma_j\left(s_j(t)-x_i\right)H_i{\phi}_i(t)+{v}_{i,j}(t),
\end{equation}
where $v_{i,j}(t)$ is a white noise process distributed according to $v_{i,j}(t) \sim \mathcal{N}(0,R_i)$ with $v_{i,j}(t)$ independent of $v_{k,l}$ if $i\neq k$ or $j\neq l$, and $\gamma_{i,j}: \mathbb{R}^N\mapsto \mathbb{R}$ is a function that captures the interdependence of measurement quality and the relative position from a given agent to a target. The intuition behind this function is that the instantaneous signal to noise ratio (SNR) can be computed as:
\begin{multline}
    \label{eq:snr}
    \frac{E\left[\norm{{z}_{i,j}(t)-{v}_{i,j}(t)}^2\right]}{E[\norm{{v}_{i,j}(t)}^2]} =\gamma_{i,j}^2\left(s_j(t)-x_i\right)\frac{\norm{H_i{\phi}_i(t)}^2}{\text{tr}(R_i)},
\end{multline}
where $\tr (\cdot)$ is the trace of the matrix. Notice that the term $\norm{H_i{\phi}_i(t)}^2(\text{tr}(R_i))^{-1}$ is a deterministic scalar that does not depend on the relative position between the target and the agent. Therefore, the function $\gamma_{i,j}$ captures entirely how the position of the agent affects the quality of the measurement. It is worth noting that in most of the applications of mobile agents to sensing there is a limited sensing range or the quality of the measurement gets worse as the agent moves farther away from the target. The general model of $\gamma_{i,j}$ is capable of capturing both the finite range and the dependence between measurement quality and relative position of the target from the agent. Even though the analysis in this paper does not depend on the specific $\gamma_{i,j}$, for the sake of concreteness we use the following form:
\begin{equation}
    \label{eq:model_gamma}
    \gamma_{i,j}(\alpha) =    
    \begin{cases}
        0, & \norm{\alpha}>r_{i,j},\\
        \sqrt{1-\frac{\norm{\alpha}}{r_{i,j}}}, & \norm{\alpha}\leq r_{i,j}.
    \end{cases}
\end{equation}

The intuition behind this specific form is that the best measurement quality is achieved when the agent's location coincides with that of the target, with the SNR decaying linearly as the agent moves away. When agent is at a distance larger or equal to its sensing radius $r_{i,j}$, only noise is observed.

In this paper we approach the problem from a centralized perspective. Therefore, at a given instant, the combined observations from all the agents of a single target can be grouped in a vector $\tilde{z}(t)$ as:
\begin{equation}
    \label{eq:vector_observation_model}
    {z}_i(t) = \begin{bmatrix}{z}_{i,1}' & ... & {z}_{i,N}' \end{bmatrix}' = \tilde{H}_i(s_1,...,s_n){\phi}_i(t)+\tilde{{v}}_i(t),
\end{equation}
where
\begin{align}
    \label{eq:def_h_tilde}
    \tilde{H}_i&=\begin{bmatrix}\gamma_1(s_1-x_i)H_i' & \cdots & \gamma_N(s_N-x_i)H_i'\end{bmatrix}', \\
    \tilde{{v}}_i(t)&=\begin{bmatrix} {v}_{i,1}'(t) & ... & {v}_{i,N}'(t)\end{bmatrix}',
    \label{eq:def_r_tilde}
 \end{align}
 and, since $v_{i,j}(t)$ is independent of $v_{i,k}(t)$ if $k\neq j$,
 \begin{align}
    E[\tilde{ v}_i'(t)\tilde{{ v}}_i(t)]&=\tilde{R}_i=
    \begin{bmatrix} 
    {R}_{i} & {0} & \dots & {0}\\
    {0} & R_i & \dots & {0}\\
    \vdots & \vdots  &\ddots & \vdots \\
    {0}& {0} & \dots   &     R_i 
    \end{bmatrix}.
\end{align}

 The overall goal is to obtain estimators $\hat{\phi}_i(t,z(t))$ and open-loop control inputs $u_j(t)$ to minimize the following cost function:
\begin{equation}
    \label{eq:cost_function_expectation}
    J=\frac{1}{t_f}\bigintssss_{0}^{t_f}\left( \sum_{i=1}^ME[e_i'(\zeta)e_i(\zeta)]+\beta \sum_{j=1}^Nu_j'(\zeta)u_j(\zeta)\right)d \zeta,
\end{equation}
where ${e_i}(t)=\hat{{\phi}}_i(t)-{{\phi}}_i(t)$ and $t_f$ is the time horizon. This cost function represents a weighted sum of the mean squared estimation error and the control effort; thus, the weighting factor $\beta$ is responsible for balancing the importance of these two optimization goals. 

The models in \eqref{eq:def_h_tilde} and \eqref{eq:def_r_tilde} define a linear time-varying stochastic system. Based on a similar statement from \cite{lan2014variational}, we have the following proposition:

\begin{proposition}
    The optimal unbiased estimator $\hat{\phi_i}$ for the the cost function \eqref{eq:cost_function_expectation}, dynamics \eqref{eq:dynamics_phi}, and observation model \eqref{eq:observation_model_ij}, is the Kalman-Bucy filter, given by:
    \begin{subequations}
	\begin{align}
    \dot{\hat{{\phi}}}_i(t)&=A_i\hat{{\phi}}_i(t)+\Omega(t)_i\tilde{H}_i'(t)\tilde{R}_i^{-1}\left(\tilde{{z}}_i(t)-\tilde{H}_i(t)\hat{{\phi}}_i(t)\right), \\
    \dot{\Omega}_i(t) &= A_i\Omega_i(t)+\Omega_i(t)A_i'+Q_i-\Omega_i(t)\tilde{H}_i'\tilde{R}^{-1}_i\tilde{H}_i\Omega_i(t), \label{eq:dynamics_omega_matricial}   
	\end{align}
\end{subequations}
where $\Omega_i(t)$ is the covariance matrix of the estimator.
\end{proposition}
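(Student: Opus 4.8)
The plan is to reduce the joint estimation-and-control problem to a family of standard linear-Gaussian filtering problems, one per target, and then invoke the classical Kalman--Bucy result. The first observation I would make is that the control inputs $u_j$ are open-loop, so the agent trajectories $s_j(t)$ and hence each observation matrix $\tilde{H}_i(t)$ in \eqref{eq:def_h_tilde} are \emph{deterministic} functions of time. Consequently, for any fixed admissible control, \eqref{eq:dynamics_phi} and \eqref{eq:vector_observation_model} constitute a linear time-varying stochastic system driven by Gaussian noise with deterministic coefficients. I would then note that the control-effort term $\beta\sum_j u_j'u_j$ in \eqref{eq:cost_function_expectation} does not depend on the choice of estimator, so minimizing $J$ over estimators is equivalent to minimizing $\sum_i E[e_i'e_i]$; moreover, since the estimator at time $\zeta$ may use all observations on $[0,\zeta]$ and the integrand is nonnegative, it suffices to minimize $\sum_i E[e_i'(\zeta)e_i(\zeta)]$ pointwise in $\zeta$.

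Next I would decouple the problem across targets. Because the process noises $w_i$ are mutually independent and the measurement noises $v_{i,j}$ are independent across $i$, the pairs $(\phi_i,z_i)$ are mutually independent, so the pointwise cost splits into a sum of per-target terms that may be minimized separately. For a single target, the minimizer of $E[\norm{\hat{\phi}_i(\zeta)-\phi_i(\zeta)}^2]$ over all causal estimators is the conditional mean $\hat{\phi}_i(\zeta)=E[\phi_i(\zeta)\mid \{z_i(\tau):0\le\tau\le\zeta\}]$; this is the standard minimum-mean-squared-error characterization, and it is automatically unbiased. Since $\phi_i$ and $z_i$ are jointly Gaussian, this conditional mean is an affine function of the observation history, which is precisely what makes a finite-dimensional recursive realization possible.

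The remaining step is to show that this conditional mean admits the stated differential realization. I would do this via the innovations approach: define the innovation $\nu_i(t)=\tilde{z}_i(t)-\tilde{H}_i(t)\hat{\phi}_i(t)$, argue that it is a white Gaussian process, and impose the orthogonality principle, which forces the estimation error $e_i(t)$ to be uncorrelated with the observation history. Propagating this condition together with the state dynamics \eqref{eq:dynamics_phi} yields the filter gain $\Omega_i\tilde{H}_i'\tilde{R}_i^{-1}$, and differentiating $\Omega_i(t)=E[e_i(t)e_i'(t)]$ while substituting this gain produces the matrix Riccati equation \eqref{eq:dynamics_omega_matricial}.

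I expect the main obstacle to be the rigorous continuous-time derivation in this last step: white-noise inputs and the stochastic integral terms must be handled with It\^{o} calculus, and one must verify both the whiteness of the innovations and that the resulting Riccati equation admits a well-defined solution on $[0,t_f]$. Since, once the reduction above is in place, the per-target problem is exactly the setting of the classical Kalman--Bucy filter, I would ultimately lean on that established result rather than re-derive it, and concentrate the argument on justifying the determinism of $\tilde{H}_i$, the estimator/control separation, and the cross-target decoupling, which are the features specific to this multi-agent formulation.
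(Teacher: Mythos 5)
Your proposal is correct, but it takes a genuinely different route from the paper's proof. You argue via classical Bayesian/innovations theory: open-loop controls make $\tilde{H}_i(t)$ deterministic, the control-effort term is estimator-independent, independence of the noise processes decouples the targets, and for each target the minimizer of the mean squared error over \emph{all} causal estimators is the conditional mean, which the classical Kalman--Bucy theorem realizes by the stated equations. The paper instead gives a self-contained variational derivation in the style of \cite{athans1967direct}: it restricts attention to unbiased estimators of the recursive form \eqref{eq:app_dynamics_mean}, parameterized by a gain $G_i(t)$, writes the covariance dynamics \eqref{eq:app_dynamics_covariance} as a function of that gain, and treats the gain as the control in a deterministic matrix optimal control problem; Pontryagin's minimum principle gives $\partial\mathcal{H}/\partial G_i=0$, the costate $\Gamma_i$ with terminal condition $\Gamma_i(t_f)=0$ is shown to be symmetric and positive definite on $[0,t_f)$, so it can be cancelled, yielding $G_i=\Omega_i\tilde{H}_i\tilde{R}_i^{-1}$ and hence the Riccati equation. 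The trade-off: your argument yields the stronger conclusion (optimality among all causal estimators, not merely those of a prescribed linear recursive structure), but it delegates the analytic core --- whiteness of the innovations and the It\^{o}-calculus justification --- to the cited classical theorem; the paper's argument avoids stochastic calculus entirely, reducing everything to a deterministic ODE and costate computation, but its optimality claim is, strictly speaking, established only within the class of estimators of the assumed form, with the assertion that this class exhausts all unbiased estimators inherited from the reference rather than proved. Your explicit treatment of why $\tilde{H}_i$ is deterministic under open-loop control and why the problem decouples across targets makes precise two steps that the paper leaves implicit.
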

\begin{proof}
    See Appendix \ref{ap:proof_kbfilter}.
\end{proof}

Using \eqref{eq:def_h_tilde} and \eqref{eq:def_r_tilde}, we can rewrite \eqref{eq:dynamics_omega_matricial} as:
\begin{equation}
    \label{eq:dynamics_omega}
    \dot{\Omega}_i(t) = A_i\Omega_i(t)+\Omega_i(t)A_i'+Q_i
    -\Omega_i(t)G_i\Omega_i(t)\sum_{j=1}^N\gamma_{i,j}^2(t),
\end{equation}
where $G_i=H_i'R_i^{-1}H_i$ and $\gamma_{i,j}(t)=\gamma_{i,j}(s_j(t)-x_i)$. Using the fact that 
\begin{align*}
	E\left[{e}_i'(t){e}_i(t)\right]=\text{tr}(E\left[{e}_i(t){e}'_i(t)\right])=\tr(\Omega_i(t)),	
\end{align*}
we can rewrite the cost function in \eqref{eq:cost_function_expectation} as
\begin{equation}
    \label{eq:cost_function}
    J=\frac{1}{t_f}\bigintsss_{0}^{t_f}\left( \sum_{i=1}^M\tr(\Omega_i(\zeta))+\beta \sum_{j=1}^Nu_j'(\zeta)u_j(\zeta)\right)d \zeta.
\end{equation}
The goal is then to minimize the cost \eqref{eq:cost_function} subject to the dynamics in \eqref{eq:dynamics_omega} and \eqref{eq:dynamics_agents}. {In other words, we aim to design a trajectory, with constrained controls $u_j\in\mathcal{U}$, and estimation error linked to the trajectory through the dynamics of the covariance matrix of the Kalman-Bucy Filter that minimizes a weighted sum of the total control effort and the mean estimation error.} 

\section{Optimization of Parameterized Trajectories}
\label{sec:transient_opt}
\subsection{Finite Horizon Trajectory Optimization}
\label{subsec:transient_opt}
Even though we focus on the optimization of infinite horizon trajectories, we briefly review the procedure for optimizing trajectories with a finite time horizon in order to later extend to the infinite horizon setting. 
In this section, we establish a general formulation, where we assume that the agent trajectories can be fully defined by a finite set of parameters. In the following sections we approach specific settings that show that parameterizations tend to naturally fit the persistent monitoring problem. Our overall goal is to compute locally optimal solutions with respect to these parameters using gradient descent. Therefore, we initially discuss how to compute the gradients for the finite horizon version of the problem. We define the set of parameters that fully describe the trajectory for $t\in[0,t_f]$ as $\Theta = \{\theta_1,...,\theta_D\}$.


Recalling the expression for the cost \eqref{eq:cost_function}, we can compute the partial derivative with respect to one of the parameters of the trajectory $\theta_d$ as:
\begin{equation}
    \label{eq:transient_gradient}
    \frac{\partial J}{\partial \theta_d}=\frac{1}{t_f}\bigintsss_{0}^{t_f}\left( \sum_{i=1}^M\tr\left(\frac{\partial \Omega_i}{\partial \theta_d}(\zeta)\right)+\beta \sum_{j=1}^N\frac{\partial (u_j'u_j)}{\partial \theta_d}(\zeta)\right)d \zeta.
\end{equation}

Note further that, given the dynamics of the covariance matrix in \eqref{eq:dynamics_omega}, $\frac{\partial \Omega_i}{\partial \theta_d}$ is the solution of the following ODE:
\begin{multline}
    \label{eq:derivative_dynamics_omega}
    \frac{d}{dt}\left(\frac{\partial {\Omega}_i(t)}{\partial \theta_d}\right) = A_i\frac{\partial \Omega_i}{\partial\theta_d}(t)+\frac{\partial \Omega_i(t)}{\partial \theta_d}A_i'+Q_i
    \\-\left(\frac{\partial{\Omega}_i(t)}{\partial \theta_d}G_i\Omega_i(t)+\Omega_i(t)G_i\frac{\partial \Omega_i(t)}{\partial \theta_d}\right)\sum_{j=1}^N\gamma_{i,j}^2(t)\\
    -\Omega_i(t)G_i\Omega_i(t)\sum_{j=1}^N\frac{\partial \gamma_{i,j}^2}{\partial \theta_d}(t),
\end{multline}
with initial conditions $\frac{\partial \Omega_i}{\partial \theta_d}(0)=0$. Also, we know that
\begin{equation}
    \frac{\partial \gamma_{i,j}^2(t)}{\partial \theta} = \sum_{p=1}^P\frac{\partial \gamma_{i,j}^2(t)}{\partial s_j^{e_p}}\frac{\partial s_j^{e_p}(t)}{\partial \theta_d},
\end{equation}
where $e_p$, $p=1,...,P$ is the $p$-th coordinate of the space where the agents move in. Given the specific definition of $\gamma_{i,j}$ in \eqref{eq:model_gamma}, we can easily see that
\begin{equation}
    \frac{\partial \gamma_{i,j}^2}{\partial s_j^{e_i}} = \begin{cases}
    \frac{s_j^{e_p}-x_i^{e_p}}{r_j\norm{s_j-x_i}},\ &\ \text{if } \norm{s_j-x_i}<r_j,\\
    0,\ &\text{ otherwise}.
    \end{cases}
\end{equation}

The only terms that we have not yet given a procedure to compute are $\frac{\partial (u_j'u_j)}{\partial \theta_d}(t)$ and $\frac{\partial s_j^{e_p}}{\partial \theta_d}(t)$. The computation of both of these terms is intrinsically related to the specific parameterization chosen and details of their computation will be discussed in Secs. \ref{sec:parameterizations} and \ref{sec:fourier_curves}. Note that we use the partial derivatives of the covariance matrices in \eqref{eq:transient_gradient} in order to compute the gradient of the cost $J$. The complete procedure to compute the transient problem gradients is given in Alg. \ref{alg:transient_optimization}.

\begin{algorithm}
\caption{Transient Gradient Computation}
\label{alg:transient_optimization}
\begin{algorithmic}[1]
\Procedure {ComputeTransientGradient}{}
\State{\bf Input}: $\Theta$
\State Compute $s_1(t),...,s_N(t)$ from the parameterization
\For{every $\theta$ in $\Theta$}
\State{Compute $\frac{\partial}{\partial \theta}\int_0^{t_f}\sum_{j=1}^N{u}_j'(\zeta){u}_j(\zeta)d\zeta$ according to the parameterization}
\State{Compute $\frac{\partial s_j(t)}{\partial \theta}$ according to the parameterization}
\For {$i$ ranging from $1$ to $M$}
\State{Compute $\frac{\partial \Omega_i(t)}{\partial \theta}$ by solving ODE \eqref{eq:derivative_dynamics_omega}}
\EndFor

\State Compute $\frac{\partial J}{\partial \theta}$ using \eqref{eq:transient_gradient}.
\EndFor
\State {\bf Output:} $\nabla J$
\EndProcedure
\end{algorithmic}
\end{algorithm}

\subsection{Steady State Persistent Monitoring}
\label{sec:periodic_persistent_monitoring}

For a persistent monitoring task to be successful, it is necessary that targets are visited infinitely often as time goes to infinity, because otherwise their uncertainty can become unbounded. Periodicity naturally fits into the persistent monitoring paradigm, since targets need to be visited infinitely often and, although a periodic structure of the solution is not necessarily optimal, simulation results in the transient case show that the trajectories tend to converge to oscillatory behavior \cite{pinto2019monitoring}. On top of that, periodicity provides an upper bound to the inter-visit time. 
Moreover, if periodicity is assumed, the infinite horizon trajectory is fully defined by the trajectory of a single period. This often leads to needing only a very small number of parameters to describe the infinite horizon trajectory and, as a consequence, only a small number of parameters have to be optimized in order to generate efficient trajectories. With that in mind, in this section we explore the properties of periodic solutions to the persistent monitoring problem when the system fulfills the following very natural assumptions.

\begin{assumption}
    The pair $(A_i,H_i)$ is detectable, for every $i\in\{1,...,M\}$.
\end{assumption}
\begin{assumption}
    $Q_i$ and the initial covariance matrix $\Sigma_i(0)$ are positive definite, for every $i\in\{1,...,M\}$.
\end{assumption}

The intuition behind the first assumption is that it ensures that sensing can guarantee that the uncertainty of each target will be bounded even for long horizons. The second one ensures that the covariance matrix will always be positive definite, a fact that will be used to prove Prop. \ref{prop:solution_lyapunov_eq}. The results in this paper would likely still hold if Assumption 2 was relaxed, even though the proof of Prop. \ref{prop:solution_lyapunov_eq} could become more complex. Under these assumptions, first we explore conditions under which the convergence of the covariance matrix is achieved. For the sake of notational conciseness, we define
\begin{equation}
    \label{eq:definition_eta}
    \eta_i(t)=\sum_{j=1}^N\gamma_{i,j}^2(t),
\end{equation}
which represents the instantaneous power level of the sensed signal, combining all the agents' observations of the same target $i$. Using a procedure similar to the one used in the proof of Lemma 9 in \cite{le2010scheduling}, we establish the following proposition:

\begin{proposition}
    \label{prop:unique_attractive_sol_riccati_eq}
    If $\eta_i(t)$ is $T$-periodic and $\eta_i(t) > 0$ for some non-degenerate interval $[a,b]\in[0,T]$, then, under Assumption 1, there exists a unique non-negative stabilizing $T$-periodic solution to \eqref{eq:dynamics_omega}.
\end{proposition}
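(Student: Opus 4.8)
My plan is to fix the target index $i$ and suppress it, viewing \eqref{eq:dynamics_omega} as the periodic Riccati differential equation $\dot{\Omega}=A\Omega+\Omega A'+Q-\eta(t)\,\Omega G\Omega$ with $T$-periodic nonnegative coefficient $\eta$ and $G=H'R^{-1}H\succeq 0$. A $T$-periodic solution is exactly a fixed point of the Poincar\'e (period) map $\mathcal{P}:\Sigma\mapsto\Omega(T)$, where $\Omega(\cdot)$ solves the equation from $\Omega(0)=\Sigma$. I would first record well-posedness: since $\Omega$ is the error covariance of the Kalman--Bucy filter it is positive semidefinite, and dropping the nonpositive quadratic term yields the comparison $\Omega(t)\preceq S(t)$ with $\dot S=AS+SA'+Q$, $S(0)=\Sigma$. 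As $S$ cannot escape in finite time, $\mathcal{P}$ is well defined and maps the cone of positive semidefinite matrices into itself. The proof then splits into existence of a fixed point (via a monotone iteration) and uniqueness of the stabilizing one.

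The engine is monotonicity of the flow. Given two solutions with $\Omega_2(0)\succeq\Omega_1(0)$, their difference $\Delta=\Omega_2-\Omega_1$ obeys the Lyapunov equation $\dot\Delta=\bar A_2\Delta+\Delta\bar A_2'+\eta\,\Delta G\Delta$, with the \emph{single} closed-loop matrix $\bar A_2=A-\eta\Omega_2 G$ and positive semidefinite forcing $\eta\,\Delta G\Delta\succeq 0$. In the variation-of-constants representation $\Delta(t)$ is the sum of the congruence $\Psi_2(t,0)\Delta(0)\Psi_2(t,0)'$ and an integral of positive semidefinite terms, so $\Delta(t)\succeq 0$; hence $\Omega_2(t)\succeq\Omega_1(t)$ for all $t$ and $\mathcal{P}$ is order preserving.

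For existence I would run the iteration $P_0=0$, $P_{n+1}=\mathcal{P}(P_n)$. Since $P_1=\mathcal{P}(0)\succeq 0=P_0$, monotonicity gives a nondecreasing sequence $P_0\preceq P_1\preceq\cdots$, which converges once it is bounded above. This is the crux, where Assumption~1 and the hypothesis $\eta>0$ on $[a,b]$ enter. The key mechanism is that on the observation window $[a,b]$ the dissipation $-\eta\,\Omega G\Omega$ caps the covariance in the observable directions independently of its incoming size (for large $\Omega$, $\dot\Omega\approx-\eta\,\Omega G\Omega$ forces a $1/t$-type decay), while detectability of $(A,H)$ keeps the remaining, unobservable modes internally stable, hence bounded during the prediction phases. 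Concretely I would exhibit a $T$-periodic supersolution $\bar P\succeq 0$ with $\mathcal{P}(\bar P)\preceq\bar P$ --- for instance as the periodic solution of the Lyapunov equation associated with a $T$-periodic output-injection gain stabilizing the monodromy, whose existence is furnished by detectability together with the observation window --- and then monotonicity gives $P_n\preceq\bar P$ for all $n$. The increasing, bounded sequence converges to some $P^\star\succeq 0$, and continuity of $\mathcal{P}$ makes it a fixed point, i.e. a nonnegative $T$-periodic solution $\Omega^\star$. I expect constructing this uniform upper bound to be the main obstacle, and it is precisely the step modeled on the proof of Lemma~9 in \cite{le2010scheduling}.

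Finally I would settle the stabilizing property and uniqueness. Under the standing Assumption~2 ($Q\succ 0$) the periodic solution is positive definite, so $W=(\Omega^\star)^{-1}$ is a well-defined, $T$-periodic, positive definite matrix, and a direct computation converts the Riccati identity into $\dot W+(\bar A^\star)'W+W\bar A^\star=-(WQW+\eta G)\prec 0$ with $\bar A^\star=A-\eta\Omega^\star G$ (the right-hand side is negative definite since $Q\succ 0$). Thus $x'W(t)x$ is a strict Lyapunov function for $\dot x=\bar A^\star(t)x$, its monodromy has spectral radius strictly below one, and $\Omega^\star$ is stabilizing. For uniqueness, if $\Omega_1,\Omega_2$ are two nonnegative stabilizing $T$-periodic solutions, their difference satisfies $\dot\Delta=\bar A_1\Delta+\Delta\bar A_2'$, so $\Delta(t)=\Psi_1(t,0)\Delta(0)\Psi_2(t,0)'$; evaluating at $t=nT$ and using $\Delta(nT)=\Delta(0)$ together with $\Psi_k(nT,0)=\Psi_k(T,0)^n\to 0$ forces $\Delta(0)=0$, whence $\Omega_1\equiv\Omega_2$. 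Beyond the standard periodic-Riccati bookkeeping, the only delicate point is the uniform bound in the existence step.
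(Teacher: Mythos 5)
Your proposal is essentially correct, but it takes a genuinely different route from the paper. The paper's proof is short and citation-based: it verifies that the time-varying pair $(A_i,\eta_i(t)H_i)$ is detectable in the periodic sense by checking Bittanti's eigenpair criterion (this is where Assumption 1 and the non-degenerate observation window $[a,b]$ enter), and then invokes a corollary of De Nicolao's convergence theorem for periodic Riccati equations, which delivers existence, uniqueness, and --- importantly --- global attractivity of the periodic solution from any positive definite initial condition, a property the paper uses immediately afterwards in \eqref{eq:infinite_time_steady_state_periodic}. You instead rebuild the periodic Riccati theory from first principles: order preservation of the flow via the identity $\dot\Delta=\bar A_2\Delta+\Delta\bar A_2'+\eta\,\Delta G\Delta$, a monotone Poincar\'e-map iteration from $P_0=0$ capped by a periodic supersolution, the Lyapunov function $W=(\Omega^\star)^{-1}$ to get the stabilizing property (this mirrors, incidentally, the argument the paper itself uses later in the proof of Prop.~\ref{prop:solution_lyapunov_eq}), and the standard two-sided transition-matrix argument for uniqueness. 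These computations are all correct. Two caveats are worth recording. First, the step you correctly identify as the crux --- existence of a $T$-periodic stabilizing output-injection gain, hence of the supersolution $\bar P$ --- is asserted rather than proved; it is exactly the periodic-detectability fact that constitutes the entire substance of the paper's proof, so in effect your argument outsources the same key fact while doing everything around it by hand. Second, your stabilizing-property argument requires $Q\succ0$ (Assumption 2) to get positive definiteness of $\Omega^\star$, whereas the proposition is stated under Assumption 1 alone; this is harmless in context since Assumption 2 is a standing assumption of the section, but it is a genuine strengthening of the hypotheses, and your construction also does not recover the attractivity statement $\lim_{t\to\infty}(\Omega_i(t)-\bar\Omega_i(t))=0$ that the paper's cited theorem provides and later relies on. What your approach buys in exchange is self-containedness: the monotonicity, uniqueness, and stability arguments are explicit rather than hidden inside references to \cite{bittanti1984periodic} and \cite{nicolao1992convergence}.
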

\begin{proof}
    According to \cite[p.~130]{bittanti1984periodic}, a pair $(A_i,\eta_i(t)H_i)$ of a periodic system is detectable if and only if for every eigenpair $(x,\lambda)$ with $x\neq0$,
    \begin{multline}
        A_ix=\lambda x \implies \exists\ [a,b]\in[0,T]\ s.t.\ \eta_i(t)e^{\lambda t}H_ix\neq0,
    \end{multline}
    $\forall t\in[a,b]$ and $[a,b]$ is non-degenerate. Notice that, due to Assumption 1, for any eigenvector $x$ of $A_i$, $H_ix\neq0$. Therefore, when $\eta_i(t)>0$ (i.e. any $t\in[a,b]$), $\eta_i(t)e^{\lambda t}H_ix\neq0$, which implies that $(A_i,\eta_i(t)H_i)$ is detectable. Therefore, the collorary to Theorem 3 in \cite[p.~95]{nicolao1992convergence} shows that there exists a non negative $T$-periodic solution to \eqref{eq:dynamics_omega}, $\bar{\Omega}_i(t)$, and 
    \begin{equation*}
        \lim_{t\rightarrow \infty}(\Omega_i(t)-\bar{\Omega}_i(t))=0
    \end{equation*}
    for any solution $\Omega_i(t)$ with positive definite initial condition $\Omega_i(0)$.
\end{proof}

Prop. \ref{prop:unique_attractive_sol_riccati_eq} implies that, if $\eta_i(t)$ is periodic, given any initial covariance matrix $\Omega_i(0)$, the estimation covariance for target $i$ converges to a $T$-periodic matrix $\bar{\Omega}_i(t)$, as long as target $i$ is visited for some non-zero amount of time in the periodic trajectory. Therefore, 
\begin{equation*}
    \forall \delta>0, \exists\ t_0\ s.t.\ |{\tilde{\Omega}_i(t)-{\Omega}_i(t)}|\leq\delta,\ \forall t\geq t_0,
\end{equation*}
which implies that

\begin{equation}
    \label{eq:infinite_time_steady_state_periodic}
    \lim_{t\rightarrow \infty}\frac{1}{t}\int_{0}^t |\tr(\tilde{\Omega}_i(t')-{\Omega}_i(t'))|\dtprime\leq\delta.
\end{equation}

This discussion implies that, if we run a periodic trajectory for long enough, the mean estimation error will become arbitrarily close to the mean steady state estimation error. Therefore, if we plan only (one period of) the steady state trajectory, the actual estimation error will be arbitrarily close to that of the planned trajectory as time goes to infinity. Even though Prop. \ref{prop:unique_attractive_sol_riccati_eq} states that the solution of the periodic Riccati equation is globally attractive, it does not provide any convergence rate for its numerical computation. However, the problem of computing numerical solutions to this equation has been studied in other works and we refer the reader to \cite{varga2013computational} for a good review and discussion of these methods.

Similarly as in the transient case, we intend to optimize the trajectory of the agents using gradient descent. However, the computation of the steady state gradients of the covariance matrix is more challenging than the transient case discussed in Subsec. \ref{subsec:transient_opt}. In the sequel, we provide the procedure to compute these gradients when they exist.

\subsection{Steady State Gradients}
\label{sec:steady_state}
Assuming that the trajectory is periodic and all the targets are visited, we introduce the change of variable $q=t/T$, where $T$ is the period of the trajectory. The steady state cost can be rewritten as:
\begin{equation}
    \label{eq:cost_function_T}
    J=\bigintsss_{0}^1\left( \sum_{i=1}^M\tr(\bar{\Omega}_i(q))+\beta \sum_{j=1}^N\bar{u}_j'(q)\bar{u}_j(q)\right)\dq,
\end{equation}
where $\bar{u}(q)=u(qT)$. Similar to \eqref{eq:transient_gradient}, we know that, given some parameter $\theta_d\in\Theta$:
\begin{equation}
    \label{eq:derivative_steady_state_cost}
    \frac{\partial J}{\partial \theta_d}=\bigintsss_{0}^1\left( \sum_{i=1}^M\tr\left(\frac{\partial \bar{\Omega}_i(q)}{\partial \theta}\right)+\beta \sum_{j=1}^N\frac{\partial (\bar{u}_j'\bar{u}_j)(q)}{\partial \theta_d}\right)\dq.
\end{equation}
{$\bar{\Omega}_i(q)$, when it exists, is defined by the following dynamics
\begin{multline}
\label{eq:scaled_riccati_diff_eq}
\dot{\bar{\Omega}}_i(q)=\frac{d{\bar{\Omega}}_i(q)}{dq} = T(A\bar{\Omega}_i(q)+\bar{\Omega}_i(q)A'+Q\\-\eta_i(q)\bar{\Omega}_i(q)G\bar{\Omega}_i(q)),
\end{multline}
along with the periodicity condition $\bar{\Omega}_i(0)=\bar{\Omega}_i(1)$. Now, suppose that the gradient of $\bar{\Omega}_i(q)$ with respect to a parameter $\theta_d$ exists. Then, this gradient is the solution of the following differential equation (note that the period may be a function of the parameters or a parameter itself):
\begin{multline}
    \label{eq:derivative_omega_i}
    \dot{\Sigma}(q)-T\biggl(A\Sigma(q)+\Sigma(q)A'-\eta_i(q)\bar{\Omega}_i(q)G\Sigma(q)\\-\eta_i(q)\Sigma(q)G_i\bar{\Omega}_i(q)\biggr) = T\frac{\partial \eta_i(q)}{\partial \theta_d}\bar{\Omega}_i(q)G_i\bar{\Omega}_i(q)+\frac{\partial T}{\partial \theta_d}\frac{\dot{\bar{\Omega}}_i}{T},
\end{multline}
with periodicity conditions $\Sigma(0)=\Sigma(1)$. In order to study the computation of $\Sigma(q)$, we define the following auxiliary problems:
\begin{equation}
    \label{eq:homogeneous_derivative_lyapunov_eq}
    \dot{\Sigma}_H-T\left(A-\eta_i\bar{\Omega}_iG\right)\Sigma_H=0,\ \Sigma_H(0)=I,
\end{equation}
\begin{multline}
    \label{eq:zero_initial_derivative_lyapunov_eq}
    \dot{\Sigma}_{ZI}-T\left(A-\eta_i\bar{\Omega}_iG\right)\Sigma_{ZI}-T\Sigma_{ZI}'\left(A-\eta_i\bar{\Omega}_iG\right)'\\
    =T\frac{\partial \eta_i}{\partial \theta}\bar{\Omega}_iG\bar{\Omega}_i+\frac{\partial T}{\partial \theta_d}\frac{\dot\Omega_i}{T},\ \Sigma_{ZI}(0)=0,
\end{multline}
where the time dependence of $\eta_i(q),\ \Omega_i(q), \Sigma_{ZI}(q)$ and $\Sigma_H(q)$ was omited for conciseness. Then, in the following Proposition we exploit these auxiliary problems for computing $\Sigma(q)$.}
\begin{proposition}
    \label{prop:solution_lyapunov_eq}
   Suppose $\Sigma_H$ is a solution of \eqref{eq:homogeneous_derivative_lyapunov_eq}, $\Sigma_{ZI}$ is a solution of \eqref{eq:zero_initial_derivative_lyapunov_eq}, Assumptions $1$ and $2$ hold, and that target $i$ is observed at least once in the period $T$. 
   {Then, the equation
    \begin{equation}
        \label{eq:discrete_lyapunov_equation}
        \Lambda=\Sigma_H(1)\Lambda\Sigma_H'(1)+\Sigma_{ZI}(1)
    \end{equation}
    has a unique solution $\Lambda$. Additionally, when $\frac{\partial {\bar\Omega}_i(q)}{\partial \theta_d}$ exists,
    \begin{equation}
        \frac{\partial {\bar\Omega}_i(q)}{\partial \theta_d} = \Sigma(q)= \Sigma_H'(q)\Lambda\Sigma_H(q)+\Sigma_{ZI}(q).
    \end{equation}}
\end{proposition}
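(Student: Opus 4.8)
The plan is to read \eqref{eq:derivative_omega_i} as a linear, inhomogeneous, time-varying Lyapunov differential equation and solve the associated periodic (two-point) boundary value problem by superposition. First I would collect terms: setting $F(q)=T(A-\eta_i(q)\bar{\Omega}_i(q)G)$ and using that $G=H'R^{-1}H$ and $\bar{\Omega}_i(q)$ are symmetric, so that $F(q)'=T(A'-\eta_i(q)G\bar{\Omega}_i(q))$, equation \eqref{eq:derivative_omega_i} collapses to $\dot{\Sigma}=F\Sigma+\Sigma F'+W$, with forcing $W(q)=T\frac{\partial\eta_i}{\partial\theta_d}\bar{\Omega}_iG\bar{\Omega}_i+\frac{\partial T}{\partial\theta_d}\frac{\dot{\bar{\Omega}}_i}{T}$. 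Since $\bar{\Omega}_i$, $\dot{\bar{\Omega}}_i$ and $\bar{\Omega}_iG\bar{\Omega}_i$ are symmetric, $W$ is symmetric, and the Lyapunov flow preserves symmetry; this is what lets the term $T\Sigma_{ZI}'(A-\eta_i\bar{\Omega}_iG)'$ in \eqref{eq:zero_initial_derivative_lyapunov_eq} be read as $\Sigma_{ZI}F'$, so that $\Sigma_H$ and $\Sigma_{ZI}$ are exactly the fundamental (transition) matrix and the particular zero-initial solution of this equation.

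Second, I would establish the superposition formula by direct differentiation. For any constant matrix $\Lambda$, using $\dot{\Sigma}_H=F\Sigma_H$ and the product rule, $q\mapsto\Sigma_H(q)\Lambda\Sigma_H'(q)$ satisfies the homogeneous equation $\dot{X}=FX+XF'$, while $\Sigma_{ZI}$ solves the inhomogeneous equation with $\Sigma_{ZI}(0)=0$; by linearity every solution of \eqref{eq:derivative_omega_i} has the form $\Sigma(q)=\Sigma_H(q)\Sigma(0)\Sigma_H'(q)+\Sigma_{ZI}(q)$, with $\Sigma(0)$ its initial value (this is the self-consistent reading of the claimed closed form, matching the placement of transposes in \eqref{eq:discrete_lyapunov_equation}).

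Third, I would impose periodicity. When the derivative $\partial\bar{\Omega}_i/\partial\theta_d$ exists, differentiating the periodic Riccati condition $\bar{\Omega}_i(0)=\bar{\Omega}_i(1)$ gives $\Sigma(0)=\Sigma(1)$, so it must solve the boundary value problem consisting of \eqref{eq:derivative_omega_i} together with this periodicity. Evaluating the superposition formula at $q=1$ and enforcing $\Sigma(0)=\Sigma(1)$ yields $\Sigma(0)=\Sigma_H(1)\Sigma(0)\Sigma_H'(1)+\Sigma_{ZI}(1)$, which is precisely the discrete (Stein) equation \eqref{eq:discrete_lyapunov_equation} with $\Lambda=\Sigma(0)$. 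Hence the whole statement reduces to showing that \eqref{eq:discrete_lyapunov_equation} has a unique solution $\Lambda$: uniqueness of $\Lambda$ simultaneously pins down the closed form and makes the boundary value problem have a single solution, which is then forced to equal $\Sigma_H(q)\Lambda\Sigma_H'(q)+\Sigma_{ZI}(q)$.

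Finally, for existence and uniqueness of $\Lambda$ I would vectorize: \eqref{eq:discrete_lyapunov_equation} reads $(I-\Sigma_H(1)\otimes\Sigma_H(1))\,\Vec(\Lambda)=\Vec(\Sigma_{ZI}(1))$, which is uniquely solvable iff no product $\lambda_a\lambda_b$ of eigenvalues of the monodromy matrix $\Sigma_H(1)$ equals $1$. The crux, and the step I expect to be the main obstacle, is to show that $\Sigma_H(1)$ is Schur, i.e. all Floquet multipliers of $\dot{x}=F(q)x$ lie strictly inside the unit disk, so that $|\lambda_a\lambda_b|<1$. This is exactly where the stabilizing property enters: by Prop.~\ref{prop:unique_attractive_sol_riccati_eq}, $\bar{\Omega}_i$ is the \emph{stabilizing} periodic Riccati solution, which by definition makes the closed-loop periodic system $\dot{x}=T(A-\eta_i\bar{\Omega}_iG)x$ exponentially stable and hence $\Sigma_H(1)$ strictly Schur; Assumption~2 ($Q_i\succ0$ and $\Sigma_i(0)\succ0$), which keeps $\bar{\Omega}_i(q)$ positive definite, is what secures strict rather than merely marginal stability. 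With $\Sigma_H(1)$ Schur, $I-\Sigma_H(1)\otimes\Sigma_H(1)$ is invertible and $\Lambda$ is unique, completing the argument; the remaining algebra (superposition and vectorization) is routine.
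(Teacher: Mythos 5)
Your proposal is correct, and its skeleton---superposition $\Sigma(q)=\Sigma_H(q)\Sigma(0)\Sigma_H'(q)+\Sigma_{ZI}(q)$, differentiation of the periodicity condition to obtain the Stein equation \eqref{eq:discrete_lyapunov_equation} with $\Lambda=\Sigma(0)$, and vectorization to reduce unique solvability to the spectral radius of $\Sigma_H(1)$---matches the paper's. The genuine divergence is at the step you yourself flag as the crux: proving $\Sigma_H(1)$ is Schur. You get it by invoking the qualifier ``stabilizing'' in Prop.~\ref{prop:unique_attractive_sol_riccati_eq}: by definition, the stabilizing periodic Riccati solution makes the closed-loop system $\dot{x}=T(A-\eta_i\bar{\Omega}_iG)x$ exponentially stable, so its monodromy matrix has spectral radius below one. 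The paper instead proves this self-containedly: Assumption 2 gives $\bar{\Omega}_i\succ 0$, so $\mathcal{W}=\bar{\Omega}_i^{-1}$ is well defined; computing $\frac{d}{dq}\left(\Sigma_H'\mathcal{W}\Sigma_H\right)=-T\Sigma_H'\mathcal{W}Q\mathcal{W}\Sigma_H$ and integrating over one period yields $\Sigma_H'(1)\mathcal{W}(0)\Sigma_H(1)-\mathcal{W}(0)\prec 0$, i.e., $\Sigma_H(1)$ is a strict contraction in the $\mathcal{W}(0)$-induced norm. Your route is shorter and exposes that the needed spectral condition is exactly the stabilizing property already guaranteed by periodic-Riccati theory; the paper's route avoids leaning on that external definition and produces an explicit Lyapunov certificate, making transparent where $Q\succ 0$ and the visiting requirement enter (positive definiteness of the dissipation integral). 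Two small remarks: strictness of stability in your argument comes from the definition of ``stabilizing'' itself, not from Assumption 2 (in the paper, Assumption 2 only makes $\mathcal{W}$ well defined and the integral positive definite); and your transpose placement $\Sigma_H(q)\Lambda\Sigma_H'(q)$ is indeed the reading consistent with \eqref{eq:discrete_lyapunov_equation}---the opposite ordering displayed in the proposition statement is a typographical slip.
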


\begin{proof}
Suppose $\Lambda$ and $\tilde{\Lambda}$ are solutions of \eqref{eq:homogeneous_derivative_lyapunov_eq}, then
\begin{equation}
    \Lambda-\tilde{\Lambda} = \Sigma_H(1)\left(\Lambda-\tilde{\Lambda}\right)\Sigma_H'(1) 
\end{equation}
which is equivalent to
\begin{equation}
    \label{eq:vectorization_difference_solutions_Lyapunov}
    \Vec{\left(\Lambda-\tilde{\Lambda}\right)} = \left(\Sigma_H(1)\otimes \Sigma_H(1)\right) \Vec{\left(\Lambda-\tilde{\Lambda}\right)},
\end{equation}
where $\Vec(\cdot)$ is the operator the performs the matrix vectorization and $\otimes$ represents the matrix Kronecker product.
Notice that $\Lambda=\tilde{\Lambda}$ is a solution of \eqref{eq:vectorization_difference_solutions_Lyapunov}. This solution is the unique solution if and only if $1$ is not an eigenvalue of $\Sigma_H(1)\otimes \Sigma_H(1)$. On the other hand, the eigenvalues of $\Sigma_H(1)\otimes \Sigma_H(1)$ are all in the form $\mu_1\mu_2$, where $\mu_1$ and $\mu_2$ are distinct eigenvalues of $\Sigma_H(1)$ \cite{zhang2011matrix}.

In the following we show that all the eigenvalues of $\Sigma_H(1)$ have absolute value lower than one. For that, first notice that since Q is positive definite,  $\bar{\Omega}_i$ is also positive definite and hence, invertible. Define
\begin{equation*}
    \mathcal{W} = \bar{\Omega}_i^{-1},
\end{equation*}
and, since $\dot{\mathcal{W}}=-\bar{\Omega}_i^{-1}\dot{\bar{\Omega}}_i\bar{\Omega}^{-1}_i=-\mathcal{W}\dot{\bar{\Omega}}_i\mathcal{W}$, using \eqref{eq:dynamics_omega} and \eqref{eq:definition_eta}, the dynamics of $\mathcal{W}$ can be expressed as:
{
\begin{equation}
    \dot{\mathcal{W}} = -T(\mathcal{W} A + A' \mathcal{W} + \mathcal{W} Q \mathcal{W}-\eta_i G).
\end{equation}
Therefore, if we define the Lyapunov Function $V = \Sigma_H'\mathcal{W}\Sigma_H$, we have that:
\begin{equation}
\begin{aligned}
    \frac{d}{dq}\left(\Sigma_H'\mathcal{W}\Sigma_H\right)&=\Sigma_H'\left(T\mathcal{W}A+TA'\mathcal{W}+T\eta_i G+\dot{\mathcal{W}}\right)\Sigma_H\\
    &=-T\Sigma_H'\mathcal{W}Q\mathcal{W}\Sigma_H.
\end{aligned}
\end{equation}
By integrating the previous relation, we have
\begin{multline}
    \Sigma_H'(1)\mathcal{W}(1)\Sigma_H(1)-\Sigma_H(0)\mathcal{W}(0)\Sigma_H(0)=\\-T\int_0^1 \Phi(q,0)'\mathcal{W}Q\mathcal{W}\Phi(q,0) \dq,
\end{multline}
where $\Phi(q_1,q_2)$ is the transition matrix of the system \eqref{eq:homogeneous_derivative_lyapunov_eq} betwen times $q_1$ and $q_2$. Moreover, since $\bar{\Omega}_i(q)$ is periodic with period one and $\Sigma_H(0)=I$, we have that
\begin{multline}
    \label{eq:sigma_1_minus_sigma_0}
    \Sigma_H'(1)\mathcal{W}(0)\Sigma_H(1)-\mathcal{W}(0)\\=-T\int_0^1 \Phi(q,0)'\mathcal{W}Q\mathcal{W}\Phi(q,0) \dq.
\end{multline}
Note that $\mathcal{W}\Phi(q,0)$ is full rank on a nontrivial set, since $\mathcal{W}$ is positive definite and $\Phi(q,0)$ is full rank for at least a non-degenerate interval due to Assumption 1 and the fact that target $i$ is observed at least once in an period. This, along with the fact that $Q$ is positive definite, implies that the integral in \eqref{eq:sigma_1_minus_sigma_0} will be a positive definite matrix. Therefore,
\begin{equation}
    \Sigma_H'(1)\mathcal{W}(0)\Sigma_H(1)-\mathcal{W}(0) \prec 0.
\end{equation}
{Consequently, one can see that
\begin{equation}
\label{eq:induced_norm_sigma}
\frac{(\Sigma_H(1)x)'\mathcal{W}(0)(\Sigma_H(1)x)}{x'\mathcal{W}(0)x}<1,
\end{equation}
for every nonzero $x$. Since $\mathcal{W}(0)$ is positive definite, \eqref{eq:induced_norm_sigma} shows that the norm of the matrix $\Sigma_H(1)$ induced by $\mathcal{W}(0)$ (i.e., $\norm{\Sigma_H(1)}_\mathcal{W}(0)$) is less than 1, therefore its spectral radius is smaller than 1. 
This implies that the absolute value of all the eigenvalues of $\Sigma_H(1)$ are smaller than 1. Hence, $\Sigma_H(1)\otimes \Sigma_H(1)$ is stable, and $\Lambda=\tilde{\Lambda}$.} Moreover, \eqref{eq:discrete_lyapunov_equation} has one solution given by
\begin{equation}
    \label{eq:explicit_solution_lyapunov_equation}
    \Lambda = \sum_{j=1}^\infty\left(\Sigma_H(1)\right)^j\Sigma_{ZI}(1)\left(\Sigma_H(1)'\right)^j.
\end{equation}
We point out that the sum in \eqref{eq:explicit_solution_lyapunov_equation} converges, since the absolute value of the eigenvalues of $\Sigma_H(1)$ are all lower than 1. }

{Now, note that \eqref{eq:derivative_omega_i} is a first order linear matrix differential equation and its general solution is given by
\begin{equation}
    \label{eq:generic_solution_matrix_ode}
    \Sigma(q) = \Sigma_H'(q)\Sigma(0)\Sigma_H(q) + \Sigma_{ZI}(q).
\end{equation}
Since there is a unique solution to \eqref{eq:generic_solution_matrix_ode}, and when ${\partial {\bar\Omega}_i(q)}/{\partial \theta_d}$ exists it must satisfy \eqref{eq:generic_solution_matrix_ode}, we know that  $\Sigma(q)={\partial {\bar\Omega}_i(q)}/{\partial \theta_d}$.}
\end{proof}

{The usefulness of Prop. \ref{prop:solution_lyapunov_eq} for persistent monitoring applications is contingent on the existence of the derivatives $\Sigma(q)={\partial {\bar\Omega}_i(q)}/{\partial \theta_d}$. In Appendix \ref{ap:existence_derivatives} we discuss the existence of these derivatives and show that they indeed exist in most practical situations.}

Also, note that the Lyapunov equation in \eqref{eq:homogeneous_derivative_lyapunov_eq} can be efficiently solved for low-dimensional systems using the algorithm proposed in \cite{barraud1977numerical} and implemented in the MATLAB function $dlyap$. We also highlight that, in order to compute the gradient, the partial derivatives of the steady state covariance matrices must be computed using the procedure in Prop. \ref{prop:solution_lyapunov_eq}. Then, these partial derivatives are used along with \eqref{eq:derivative_steady_state_cost} to compute the partial derivatives of the cost, which compose the gradient $\nabla J$. Algorithm \ref{alg:infinite_horizon_optimization} summarizes the procedure to compute the steady state gradients.

\begin{algorithm}
\caption{Steady State Gradient Computation}
\label{alg:infinite_horizon_optimization}
\begin{algorithmic}[1]
\Procedure {ComputeSteadyStateGradient}{}
\State{\bf Input}: $\Theta$
\State Compute $s_1(q),...,s_N(q)$ from the parameterization
\For{$i$ ranging from $1$ to $M$}
\State Compute the steady state covariance $\bar{\Omega}_i(q)$
\EndFor
\State{Compute $\frac{\partial}{\partial \theta}\int_0^{t_f}\sum_{j=1}^N{u}_j'(\zeta){u}_j(\zeta)d\zeta$ according to the parameterization}
\State{Compute $\frac{\partial s_j(t)}{\partial \theta}$ and $\frac{\partial T}{\partial \theta}$ according to the parameterization}
\For{every $\theta$ in $\Theta$}
\For {$i$ ranging from $1$ to $M$}
\State Compute $\frac{\partial \Omega_i(q)}{\partial \theta}$ as indicated in Prop. \ref{prop:solution_lyapunov_eq}.
\EndFor
\State Compute $\frac{\partial J}{\partial \theta}$ using \eqref{eq:derivative_steady_state_cost}
\EndFor
\State {\bf Output:} $\nabla J$
\EndProcedure
\end{algorithmic}
\end{algorithm}

In order to locally optimize the trajectories, the gradient computation needs to be used along with some gradient descent scheme. We describe the optimization procedure we used in Alg. \ref{alg:gradient_descent}, 
where $\kappa_l$ is a scalar positive gain, and the $proj$ operator projects the parameters into the set of feasible parameters ($u_j(t)\in\mathcal{U}$). As a side note, this projection might be difficult to compute in general and, therefore when choosing a parameterization it is important to make sure that there are efficient ways to compute this projection numerically.

 \begin{algorithm}
 \caption{Gradient Descent}
 \label{alg:gradient_descent}
 \begin{algorithmic}[1]
 \Procedure{Gradient Descent}{}
\State {\bf Input:} $\Theta^0$, 
\State $||\nabla J|| \leftarrow \infty$
\State $l\leftarrow 0$
\While {$||\nabla J||>\epsilon$}
\State $\nabla J\leftarrow$ComputeGradient($\Theta^l$)
\State ${\Theta}^{l+1}\leftarrow \text{proj}(\Theta^{l}-\kappa_l\nabla J)$
\
\State $l\leftarrow l+1$

\EndWhile
\State {\bf Output:} $\underline{\Theta}^l$
\EndProcedure
\end{algorithmic}
 \end{algorithm}


\section{Parameterization of an Optimal Trajectory in 1-D with speed bounds}
\label{sec:parameterizations}
When the agents and targets are constrained to a line, a particularly interesting case is the one where the absolute value of controls is bounded ($\mathcal{U}=\{u\in \mathbb{R}\ |\ |u|<u_{\max}\}$) and there is no penalty for control effort in the optimization cost $J$ (i.e. $\beta = 0$). In this case we can represent optimized controls using a simple parameterization that could even lead to global optimality. It is worth noticing that in many real-world applications of persistent monitoring agents are constrained to (possibly multiple) uni-dimensional mobility paths, such as powerline inspection agents, cars on streets, and autonomous vehicles in rivers.

Assuming proper rescaling, we can consider $-1\leq u_j \leq 1$, i.e., $\mathcal{U}=[-1,1]$. In the remainder of this section, we derive properties of the optimal control, establish a parameterization that is able to represent an optimal control, and then compute the gradients necessary in order to optimize the trajectories.

\subsection{Properties of an Optimal Control}
In order to derive the properties of an optimal control,
we first introduce the following lemma.  The intuition behind it is that if a target is observed for a longer time (or with better quality), its uncertainty will be lower. {We note that, although this lemma is introduced in this Section, it is not restricted to the 1D setting with bounded input.}

\begin{lemma}
\label{prop:monotonicity_ricatti_eq}
 Given $\Omega_1(t)$ and $\Omega_2(t)$, two bounded covariance matrices under the dynamics in \eqref{eq:dynamics_omega} with $A=A_1=A_2$, $G=G_1=G_2$, $Q=Q_1=Q_2$, then if $\Omega_1(0)-\Omega_2(0)$ is negative semi-definite and $\eta_1(t) \geq \eta_2(t)\ \forall t$, then $\Omega_1(t)-\Omega_2(t)$ is a negative semi definite matrix for all $t\geq 0$.
\end{lemma}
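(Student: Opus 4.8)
The plan is to track the difference $\Delta(t) := \Omega_1(t)-\Omega_2(t)$ and show that it satisfies a \emph{linear} time-varying Lyapunov differential equation driven by a sign-definite forcing term; this is a comparison-theorem argument for Riccati equations. Subtracting the two instances of \eqref{eq:dynamics_omega} with common $A$, $G$, $Q$ gives $\dot\Delta = A\Delta+\Delta A' -\eta_1\Omega_1 G\Omega_1+\eta_2\Omega_2 G\Omega_2$. The only nonlinear obstruction is the quadratic term, and the key manoeuvre is to linearize it. First I would peel off the $\eta$-gap,
\begin{equation*}
\eta_1\Omega_1 G\Omega_1-\eta_2\Omega_2 G\Omega_2=\eta_2\left(\Omega_1 G\Omega_1-\Omega_2 G\Omega_2\right)+(\eta_1-\eta_2)\,\Omega_1 G\Omega_1,
\end{equation*}
and then use the symmetric identity $\Omega_1 G\Omega_1-\Omega_2 G\Omega_2=\tfrac12(\Omega_1+\Omega_2)G\Delta+\tfrac12\Delta G(\Omega_1+\Omega_2)$, which is valid because $\Omega_1$, $\Omega_2$, and $G=H'R^{-1}H$ are all symmetric. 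Finding this symmetric factorization so that $\Delta$ appears linearly is the crux of the argument.

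Setting $M(t):=A-\tfrac{\eta_2(t)}{2}\bigl(\Omega_1(t)+\Omega_2(t)\bigr)G$, the dynamics of the difference collapse to the Lyapunov differential equation $\dot\Delta = M\Delta+\Delta M'-(\eta_1-\eta_2)\,\Omega_1 G\Omega_1$. Its closed-form solution, via the state-transition matrix $\Phi(t,\tau)$ of $\dot x=M(t)x$, is
\begin{equation*}
\Delta(t)=\Phi(t,0)\Delta(0)\Phi(t,0)'-\int_0^t\Phi(t,\tau)\,(\eta_1(\tau)-\eta_2(\tau))\,\Omega_1(\tau)G\Omega_1(\tau)\,\Phi(t,\tau)'\dt.
\end{equation*}
Here the boundedness of $\Omega_1$ and $\Omega_2$ assumed in the statement is exactly what guarantees that $M(t)$ is well defined and bounded, so that $\Phi(t,\tau)$ exists on $[0,\infty)$ and the integral makes sense.

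It then remains to read off the sign. The homogeneous term $\Phi(t,0)\Delta(0)\Phi(t,0)'$ is a congruence transform of $\Delta(0)\preceq0$, hence negative semidefinite. For the integrand, $\eta_1-\eta_2\ge0$ by hypothesis, while $\Omega_1 G\Omega_1=(H\Omega_1)'R^{-1}(H\Omega_1)\succeq0$ because $R\succ0$; thus each matrix $\Phi(t,\tau)(\eta_1-\eta_2)\Omega_1 G\Omega_1\Phi(t,\tau)'$ is positive semidefinite, the integral of positive semidefinite matrices is positive semidefinite, and its negative is negative semidefinite. A sum of two negative semidefinite matrices is negative semidefinite, so $\Delta(t)\preceq0$ for all $t\ge0$, which is the claim. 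I expect the genuine work to be concentrated in the symmetric linearization of the quadratic term; once the equation is in Lyapunov form, the sign propagation via $\Phi$ and congruence is routine and relies only on $G\succeq0$, $R\succ0$, and the hypotheses on $\Delta(0)$ and $\eta_1-\eta_2$.
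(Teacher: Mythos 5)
Your proof is correct, and it takes a genuinely different route from the paper's. Both proofs start identically by writing the ODE for the difference $\Delta=\Omega_1-\Omega_2$, but they diverge at the linearization of the quadratic term. The paper adds and subtracts $\eta_1\Omega_2 G\Omega_2$ and $\eta_1\Omega_1 G\Omega_2$, producing the \emph{non-symmetric} linear term $-\eta_1\left(\Omega_1 G\Delta+\Delta G\Omega_2\right)$ plus the forcing $(\eta_2-\eta_1)\Omega_2 G\Omega_2$; because this is a Sylvester-type rather than Lyapunov-type equation, the congruence argument is unavailable, and the paper instead tracks the eigenvalues of $\Delta(t)$ directly, invoking a $C^1$-parameterization-of-eigenvalues result, the formula $\dot{\mu}_n=x_n'\dot{\Delta}x_n$, and Gronwall's inequality to show no eigenvalue can cross zero. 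Your symmetric split $\Omega_1G\Omega_1-\Omega_2G\Omega_2=\tfrac12(\Omega_1+\Omega_2)G\Delta+\tfrac12\Delta G(\Omega_1+\Omega_2)$ puts the equation in genuine Lyapunov form $\dot\Delta=M\Delta+\Delta M'-(\eta_1-\eta_2)\Omega_1G\Omega_1$, after which semidefiniteness propagates by congruence through the state-transition matrix. What your route buys: it avoids the eigenvalue-smoothness machinery and the differential-inequality bookkeeping entirely (the paper's bound $\dot{\mu}_n\leq\norm{A}\mu_n-\eta_1\lambda_{\min}\mu_n$ is in fact stated somewhat loosely, and it overloads the symbol $\beta$ for both the difference matrix and a minimum eigenvalue), replacing them with elementary facts about linear matrix ODEs; it also makes the role of the hypotheses transparent, since $\Delta(0)\preceq0$ and $\eta_1-\eta_2\geq0$ each enter through exactly one sign-definite term in the variation-of-constants formula. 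The paper's eigenvalue route, for its part, never needs the transition matrix or the closed-form solution. Two trivial remarks: your integral should end in $d\tau$, not $dt$; and the well-posedness of $\Phi$ also uses boundedness of $\eta_2$, which holds automatically since each $\gamma_{i,j}^2\leq1$ implies $\eta_2\leq N$.
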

\begin{proof}
    Define $\beta=\Omega_1(t)-\Omega_2(t)$. The dynamics of $\beta$ is described by the following equation.
    \begin{multline}
        \label{eq:dynamics_xi_initial}
        \dot{\beta}(t) = A\beta(t)+\beta A'-\eta_1(t)\Omega_1(t)G\Omega_1(t)\\+\eta_2(t)\Omega_2(t)G\Omega_2(t).
    \end{multline}
    Adding and subtracting the terms $\eta_1(t)\Omega_2(t)G\Omega_2(t)$ and $\eta_1(t)\Omega_1(t)G\Omega_2(t)$ to the equation, we can rewrite \eqref{eq:dynamics_xi_initial} as:
    \begin{multline}
        \label{eq:dynamics_xi}
        \dot{\beta}(t) = A\beta(t)+\beta A'-\eta_1(t)\left[\Omega_1(t)G\beta(t)+\beta(t)G\Omega_2(t)\right]\\+\left[\eta_2(t)-\eta_1(t)\right]\Omega_2(t)G\Omega_2(t).
    \end{multline}
    From Thm.~1.e in \cite{kriegl2011denjoy}, since $\beta(t)$ is a $C^1$ matrix, its eigenvalues 
can be $C^1$ time parameterized. Let $\mu_n$ denote the $n^{th}$ eigenvalue of $\beta(t)$ and $x_n(t)$ the corresponding unit norm eigenvector. 
Then, from Thm.~5 in \cite{lancaster1964eigenvalues} we have that
    \begin{equation*}
        \dot{\mu}_n=  x_n'\dot{\beta} x_n.
    \end{equation*}
Also, notice that by using \eqref{eq:dynamics_xi} and the fact that $\lambda_{\min}\left(\frac{D+D'}{2}\right)\leq \frac{x'Dx}{\norm{x}}\leq \lambda_{\max}\left(\frac{D+D'}{2}\right)=\norm{D}$, for any square matrix D,
\begin{align*}       
    \dot{\mu}_n &\leq \norm{A}\mu_n-\eta_1\beta\mu_n+\left[\eta_2-\eta_1\right]x_n'\Omega_2G\Omega_2x_n \\
    &\leq \norm{A}\mu_n-\eta_1\beta\mu_n,
\end{align*}
where $\beta=\lambda_{min}\left((\Omega_1+\Omega_2)G+G(\Omega_1+\Omega_2)\right)$.
Using Gronwall's inequality \cite{gronwall1919note} and the fact that the solution of a first order linear homogeneous ODE does not change sign, we conclude that $\mu_n(t)\leq0, \forall\ t\in [0,T]$ and, therefore, $\beta(t)$ is negative semidefinite.
\end{proof}
In Lemma \ref{prop:monotonicity_ricatti_eq}, $\Omega_1$ and $\Omega_2$ can also be understood as covariance matrices for the same target but under different agent trajectories.

Before proceeding to the proposition about an optimal control structure, a few definitions are necessary. We define an {\it isolated target} $i$ as a target such that 
\begin{align*}
	\min\limits_{k\neq i}|x_i-x_k|>2r_{\max}, \quad r_{\max}=\max_{i,j}\{r_{i,j}\}.
\end{align*} 
Therefore, an isolated target is a target for which an agent cannot see another target when visiting it. Referring to the regions in space where an agent can sense a target as ``visible areas", the minimum distance between visible areas $d_{\min}$ is defined as:
\begin{align*}
	d_{\min} = \min_{i,k}|x_i-x_k|-2r_{\max}>0,
\end{align*}
and the finite time cost is defined as
\begin{equation}
    \label{eq:finite_time_opt_objective}
    J(u_1,...,u_N,t)=\frac{1}{t}\int_0^{t}\left(\sum_{i=1}^M\text{tr}\left(\Omega_i(\beta)\right)\right) d\beta.
\end{equation}

We can then claim the following proposition.

\begin{proposition}
    \label{prop:optimal_control}
    In an environment where all the targets are isolated, given any policy $u_j(\beta)$, $j=1,...,N$, then there is a policy $\tilde{u}_j(\beta)$ with $\tilde{u}_j(\beta)\in\{-1,0,1\}$ $\forall \beta\in[0,t]$ and with the number of control switches for each agent (i.e. discontinuities in $\tilde{u}_j(\beta)$) upper bounded by $2\frac{t}{d_{\min}}+4$ such that 
    $J(u_1,...,u_N,t)\geq J(\tilde{u}_1,...,\tilde{u}_N,t)$.
\end{proposition}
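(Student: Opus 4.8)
The plan is to keep the same initial covariances and the same \emph{time windows} during which each agent observes each target, but to replace the within-window motion by a ``dash-and-dwell'' profile that can only improve the instantaneous observation power. Concretely, I would construct $\tilde u$ so that for every target $i$ one has $\tilde\eta_i(\beta)\ge\eta_i(\beta)$ for all $\beta\in[0,t]$, where $\tilde\eta_i$ is the signal power in \eqref{eq:definition_eta} produced by $\tilde u$. Once this pointwise domination is in hand, Lemma~\ref{prop:monotonicity_ricatti_eq} applies to each target with the \emph{equal} initial condition $\tilde\Omega_i(0)=\Omega_i(0)$ (so the initial difference is the zero matrix, which is negative semidefinite), yielding $\tilde\Omega_i(\beta)-\Omega_i(\beta)\preceq 0$ and hence $\tr(\tilde\Omega_i(\beta))\le\tr(\Omega_i(\beta))$ for all $\beta$. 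Summing over $i$ and integrating over $[0,t]$ (recall $\beta=0$, so \eqref{eq:finite_time_opt_objective} carries no control term) gives $J(u_1,\dots,u_N,t)\ge J(\tilde u_1,\dots,\tilde u_N,t)$ at once.

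The construction is done one agent at a time. Because the targets are isolated, the visible areas (the position intervals where $\gamma_{i,j}>0$) are pairwise disjoint and separated by at least $d_{\min}$, so I can decompose $[0,t]$ into maximal \emph{observation intervals}, on each of which agent $j$ lies inside the visible area of a single target, and complementary \emph{transit intervals}, on which it lies outside all visible areas. Consecutive observation intervals of the \emph{same} target (separated only by a brief excursion outside that target's visible area) are first merged into one, which is harmless since $\gamma_{i,j}=0$ on the excursion. On each observation interval $[\tau_1,\tau_2]$ of target $i$, writing $r$ for the sensing radius $r_{i,j}$, I replace $s_j$ by the trajectory that dashes toward $x_i$ at unit speed, dwells at the deepest reachable distance $\tilde d^*=\max\{0,\,r-(\tau_2-\tau_1)/2\}$, and dashes back to the original exit position; on each transit interval I join the two endpoints by a unit-speed dash followed, if time remains, by a dwell kept outside all visible areas. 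This makes $\tilde u_j\in\{-1,0,1\}$ everywhere.

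The key estimate is that this profile dominates the original pointwise \emph{inside} each observation interval. With $d(\beta)=|s_j(\beta)-x_i|$, the unit-speed bound and the boundary values $d(\tau_1)=d(\tau_2)=r$ give $d(\beta)\ge\max\{r-(\beta-\tau_1),\,r-(\tau_2-\beta)\}$, and adding the two one-sided bounds yields $d(\beta)\ge r-(\tau_2-\tau_1)/2=\tilde d^*$. Since the replacement realizes exactly $\tilde d(\beta)=\max\{\tilde d^*,\,r-(\beta-\tau_1),\,r-(\tau_2-\beta)\}$, we obtain $\tilde d(\beta)\le d(\beta)$, hence $\tilde\gamma_{i,j}^2\ge\gamma_{i,j}^2$ because $\gamma_{i,j}^2=1-d/r$ is decreasing in $d$. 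Outside the observation intervals of target $i$ the agent stays outside $i$'s visible area in both trajectories (the direct transit dash covers a subset of the range the original swept, which contained no visible area), so $\gamma_{i,j}=0$ there in both and domination is trivial; summing over agents gives $\tilde\eta_i\ge\eta_i$ for every $i$.

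For the switch count, after merging, consecutive observation intervals belong to distinct targets, so each intervening transit has length at least $d_{\min}$; hence there are at most $t/d_{\min}$ transits and at most $t/d_{\min}+1$ observation intervals. Each observation interval contributes at most two control discontinuities (entering and leaving its dwell), which accounts for the $2\,t/d_{\min}$ term, while the initial and terminal segments and the at most two partial intervals at the endpoints of $[0,t]$ (where the agent may begin or end already inside a visible area, so the endpoint is not on the boundary) supply the remaining additive constant, giving $2\,t/d_{\min}+4$. I expect the main obstacle to be exactly this bookkeeping: arranging the replacement so that the dashes of adjacent segments share a direction wherever possible (so each visit truly costs two switches rather than four) and handling the two partial boundary intervals, rather than the domination inequality itself, which follows cleanly from the unit-speed bound together with Lemma~\ref{prop:monotonicity_ricatti_eq}.
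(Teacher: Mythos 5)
Your cost-domination half is sound and follows essentially the paper's own route: build a policy whose $\tilde\eta_i(\beta)\geq\eta_i(\beta)$ pointwise, invoke Lemma~\ref{prop:monotonicity_ricatti_eq} with equal initial covariances, and integrate the traces; your explicit estimate $d(\beta)\geq\max\{\tilde d^*,\,r-(\beta-\tau_1),\,r-(\tau_2-\beta)\}$ inside an observation window is in fact more careful than what the paper writes down. The genuine gap is the switch count, and it is not the direction-matching issue you flag: in 1D, under target isolation, the dash out of one visible area, the transit dash, and the dash into the next visible area are automatically co-directional (if the next target were on the other side of the current one, the transit segment would have to cross the current visible area, contradicting the definition of a transit interval). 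The real problem is your anchoring. Because you pin the replacement to the original trajectory at both the \emph{entry and the exit} of every observation interval, every transit interval in which the original moved slower than a direct full-speed dash must contain its own dwell ($u=0$) flanked by dashes. That dwell costs two extra discontinuities per transit, on top of the two inside each observation interval, so your construction only certifies roughly $4t/d_{\min}+O(1)$ switches, not the claimed $2t/d_{\min}+4$. No amount of re-counting repairs this within your segmentation; the construction itself has to change.

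The repair is exactly what the paper does: anchor the replacement to the original only at the visit \emph{start} times $t_p^j$ (first entry into each successive visible area). Between consecutive start times the agent dashes at full speed toward the currently visited target, dwells there (or reverses at the last feasible moment if it cannot reach the target in time), and then dashes straight to the next start position $s_j(t_{p+1}^j)$ so as to arrive exactly at $t_{p+1}^j$. The single dwell per visit absorbs all the slack that your scheme spreads over an observation segment and the following transit, so each visit contributes at most two discontinuities; with at most $t/d_{\min}+1$ visits plus the endpoint corrections, the stated bound $2t/d_{\min}+4$ follows. Your domination estimate transfers to this re-anchored construction almost verbatim: during the dash-in and dwell phases the replacement's distance to the target decreases at the maximal rate from a common starting point, and during the outgoing dash the Lipschitz bound on the original forces it to lie on the far side of the replacement relative to the target, so the replacement is never farther from the target than the original. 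The rest of your argument (Lemma~\ref{prop:monotonicity_ricatti_eq} with zero initial difference, then integrating the cost \eqref{eq:finite_time_opt_objective}, which carries no control term since the weight is zero) then goes through unchanged.
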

\begin{proof}
    We prove this result by construction: given a policy $u_j(t')$ with $\eta_i(t')$ associated to it (as defined by \eqref{eq:definition_eta}), we will construct an alternative policy $\tilde{u}_j(t')$ associated with $\tilde{\eta}_i(t')$ such that $\tilde{\eta}_i(t') \geq \eta_i(t')\ \forall t'\in[0,t]$ and $i=1,...,M$, and then use Prop. \ref{prop:monotonicity_ricatti_eq}, along with the definition of the cost \eqref{eq:finite_time_opt_objective}, to show that the alternative policy has lower or equal cost than the original one.
    
    Initially, we focus on the policy $u_j(t')$.  We say that an agent $j$ ``visits" a target $i$ if at some time $t'$, $|s_j(t')-x_i(t')|<r_j$. For every agent in the policy $u_j(t')$, there is an ordered collection of targets it visits in $[0,t]$. Therefore, there must exist a set of indices of all the targets visited by agent $j$: $\{y^j_0,...,y^j_{K_j}\}\in \{1,...,M\}$, such that $y_p^j\neq y_{p-1}^j$ and agent $j$ visited no other target in the time between visiting targets $y_p^j$ and $y_{p-1}^j$. This is the sequence of all the targets that agent $j$ visited over $[0,t]$, not considering consecutive visits to the same target. In other words, the same target can be present more than once in the sequence $\{y^j_0,...,y^j_{K_j}\}$ but, if that is the case, it will not be in consecutive positions.
    
    For each of these visits, we can define the initial visiting time $t_p^j$ for $p=1,...,K_j$ as
    \begin{align*}
        t_p^j = \inf \{t' |t'>t_{p-1}^j \text{ and agent $j$ visits target $y_p^j$ at time $t'$} \}, 
    \end{align*}
    and $t_0^j=0$ and $t_{K_j+1}^j=t$. Also note that while $t' \in [t_{p-1}^j,t_p^j)$, agent $j$ only influences the value of $\eta_i(j)$ of the target it is currently visiting.
    We propose the following alternative policy, where $\tilde{u}_j(t')$ for $t'\in[t_{p-1}^j,t_{p}^j)$ is such that:
    \begin{equation*}
        \label{eq:alternative_u}
        \tilde{u}_j(t')=\begin{cases}
            \frac{s_j(t_p^j)-s_j(t')}{|s_j(t_p^j)-s_j(t')|},&\text{ if }\frac{|s_j(t_p^j)-s_j(t')|}{t_p^j-t'} \leq 1,\\
            \frac{x_{y_p^j}-s_j(t')}{|x_{y_p^j}-s_j(t')|},&\text{ if }\frac{|s_j(t_p^j)-s_j(t')|}{t_p^j-t'}>1 \text{ and }\\&\qquad \qquad \ \ \  \ s_j(t') \neq x_{y_p^j}.\\
            0,&\text{ otherwise.}\ \\ 
        \end{cases}
    \end{equation*}
    Notice that this construction provides a feasible trajectory, since the original trajectory is assumed feasible. Also, in the alternative policy $\tilde{u}_j(t')\in\{-1,0,1\}$ $\forall t'\in[0,t]$, since the speed is either zero or a scalar divided by its absolute value.
    
    The intuition behind the proposed alternative policy is that at the beginning of each visit, the agent moves with maximum speed towards the target $y_p^j$ and if it reaches the target, it dwells on top of it. However, it must move in a way such that it begins the next visit at the same time as in the original policy, i.e., the positions of agent $j$ associated to the alternative policy $\tilde{s}_j(t')$ is such that $\tilde{s}_j(t_p^j)={s}_j(t_p^j)$.
    
     Also, for time $t'\in [t_{p}^j,t_{p+1}^j]$ both the original and the alternative policies only influence the value of $\eta_i$ for $i=y_p^j$, since in the alternative policy the agent is closer (or at least as close) to the currently visited target. Thus, from \eqref{eq:definition_eta} we have that
    \begin{equation*}
        \tilde{\eta}_i(t')\geq \eta_i(t'),\ \forall t'\in[0,t],\ i\in\{1,...,M\}.
    \end{equation*}
    Therefore, using Lemma \ref{prop:monotonicity_ricatti_eq} and the cost definition \eqref{eq:finite_time_opt_objective}, we get that
     \begin{multline*}
    J(\tilde{u}_1,...,\tilde{u}_N,t)-J({u}_1,...,{u}_N,t)=\\
    \frac{1}{t}\int_0^t\sum_{i=1}^M\tr{\left(\tilde{\Omega}_i(t')-{\Omega}_i(t')\right)}\leq0.
     \end{multline*}
     which shows that the alternative policy has a lower or equal cost compared to the original one. Note that, due to velocity constraints, in both the original and the alternative policy there is a maximum of $\frac{t}{d_{\min}}+1$ visits to targets per agent. Moreover, in the alternative policy, an agent has at most 2 velocity switches at each target visit. Therefore, at most $2\frac{t}{d_{\min}}+4$ velocity switches can happen due to target visits, plus one switch to match the initial position of the original policy and another to match the terminal position of the original policy. 
     This implies that the maximum number of velocity switches in the alternative policy is $2\frac{t}{d_{\min}}+4$. 
\end{proof}
One way to interpret this proposition is that if one looks ahead at the next $T$ units of time (where $T$ is the period of a periodic solution or the prediction horizon, in the transient case), any trajectory can be improved (or at least, maintain same cost) by adequately selecting its controls $u_j(t)$ in the set $\{-1,0,1\}$. Also, notice that even though we were not able so far to prove that the same result holds when the targets are not necessarily isolated, the same structure can still be used but without the guarantee of optimality.

\subsection{Parameterization of an Optimal 1D Trajectory}
The result in Prop. \ref{prop:optimal_control} implies that when the targets are isolated, {there is no loss of performance if we restrict ourselves to controls of the form $u_j(t)\in\{-1,0,1\}\ \forall t>0$, with a bounded number of control switches}. This property allows the optimal trajectory to be described by a finite set of parameters, similar to optimal control results in previous work by the authors \cite{cassandras2013optimal,pinto2019monitoring}. Here, in particular, we are looking into periodic trajectories and, hence, this property implies that the movement in each period of agent $j$ consists of a sequence of dwelling at the same position for some duration of time followed by moving at maximum speed to another location. Therefore, one period of the trajectory of an agent $j$ can fully be described by the following set of parameters:
\begin{enumerate}
\item $T$, the period of the trajectory. 
\item $s_{j}(0)$, the initial position.
\item $\omega_{j,p}$, $p=1,...,P_j$, the normalized dwelling times for agent $j$, i.e., the agent dwells for $\omega_{j,p}T$ units of time before it moves with maximum speed for the $p$-th time in the cycle.
\item $\tau_{j,p}$, $p=1,...,P_j$, the normalized movement times for agent $j$, i.e., the agent $j$ moves for $\tau_{j,p}T$ units of time to the right (if $p$ is odd) or to the left (if $p$ is even) after dwelling for $\omega_{j,p}T$ units of time in the same position.
\end{enumerate}

To enforce consistency of the trajectory, we add the following constraints:
\begin{equation}
\label{eq:natural_parameters_constraints}
\begin{gathered}
    \tau_{j,m}\geq 0,\ \omega_{j,m}\geq 0,\ T\geq 0.
\end{gathered}    
\end{equation}
Notice that this description does not exclude transitions of $u_j$ of the kind $\pm1\rightarrow \mp 1$ and $\pm1\rightarrow 0 \rightarrow \pm1$, since it allows $\omega_{j,m}=0$ and $\tau_{j,m}=0$. In addition to the constraints in \eqref{eq:natural_parameters_constraints}, in order to ensure periodicity, we need to make sure that the sum of the movement times and dwelling times does not exceed one period and that the total time spent moving to the left is equal to the total time spent moving to the right over one period (i.e. the agent returns to its initial position at the end of the period). Therefore, we have the additional constraints:
\begin{equation}
\label{eq:periodic_parameters_constraints}
\begin{gathered}
     \sum_{m=1}^{P_j}(\tau_{j,m}+\omega_{j,m})\leq 1,\ \  \sum_{m=1}^{P_j}(-1)^m \tau_{j,m}=0.
\end{gathered}    
\end{equation}

This parameterization defines a hybrid system in which the dynamics of the agents remain unchanged between events and abruptly switch when an event occurs. Events are given by a change in control value at completion of movement and dwell times. Note that these may occur simultaneously, for instance, if the dwell time is zero (representing a switch of control from $\pm 1$ to $\mp 1$).
This parameterization also applies to the aperiodic transient case, with minor modifications to the constraints imposed to the parameters. Although we do not explore all the details for the sake of readability, we refer the interested reader to \cite{pinto2019monitoring}. 


\subsection{Position Gradients}
Given this parameterization, we use the procedure given in Sec. \ref{sec:transient_opt} to optimize the cost. However, one item missing in Sec. \ref{sec:transient_opt} for computing the gradient of the covariance matrix was the gradient of the agent position with respect to the parameters defining the trajectory.

The movement and dwelling time parameterization defines, along with the uncertainty metric, a hybrid system. For such systems, Infinitesimal Perturbation Analysis (IPA) can be used to compute an event-driven online estimate of the stochastic gradient of the system. An important feature of IPA is that the unbiased gradient estimate can be computed online using only the data observed along the trajectory. Even though we do not discuss in this paper the details of the IPA interpretation of the equations in this subsection, we refer the reader to \cite{cassandras2010perturbation,cassandras2013optimal} for more information about IPA. 

One can see that the position of agent $j$ at normalized time $q$, after the $k$-th event and before the $k+1$-th is
\begin{equation}
s_j(q) -s_j(0)=
\begin{cases}
    T\biggl((-1)^{k/2+1}\biggl(q-\sum_{p=1}^{k/2-1}(\tau_{j,p}+\omega_{j,p})\\ {\ \ }+\omega_{j,\frac{k}{2}}\biggr)+\sum_{p=1}^{k/2}(-1)^{p+1}\tau_p\ \biggl),\ k\ \text{even},\\
    T\sum_{p=1}^{\frac{k-1}{2}}(-1)^{p+1}\tau_{j,p},\ k\ \text{odd}.
\end{cases}
\end{equation}

Therefore, we can compute the following gradients,
\begin{subequations}
\begin{align}
    \frac{\partial s_j}{\partial \tau_{j,m}} &= 
 \begin{cases}
    \left((-1)^{\frac{k}{2}}+(-1)^{m+1}\right)T,&\ m<\frac{k}{2},\ k\ \text{even},\\
    (-1)^{m+1}T,\ m\leq\frac{k-1}{2},&\ k\ \text{odd},\\
    0,&\ \text{otherwise},
 \end{cases}\\
    \frac{\partial s_j}{\partial \omega_{j,m}} &=
    \begin{cases}
        T,\ m\leq\frac{k}{2},&\ k\ \text{even},\\
        0&,\ \text{otherwise},
    \end{cases}\\
    \frac{\partial s_j(q)}{\partial T} &= \frac{s_j(q)-s_j(0)}{T},\\
    \frac{\partial s_j}{\partial s_j(0)}&=1.
\end{align}
\end{subequations}
\subsection{Initial Trajectory for the Optimization}
\label{subsec:initialization_1d}
While we use a gradient descent approach in Alg. \ref{alg:gradient_descent} to locally minimize the cost function, it is necessary to find an initial parameter configuration. Therefore, we propose a method to efficiently compute a starting point for the optimization.

Proposition \ref{prop:unique_attractive_sol_riccati_eq} states that if every target is visited at least once in a periodic trajectory, then the steady-state covariance matrix exists. However, if in a periodic trajectory one of the targets is never visited and its internal state dynamics is unstable, then the estimation error will grow without bound as time goes to infinity. 
Also, when a target is not visited in the initial trajectory, the gradient descent optimization may converge to undesired solutions, a problem known as the ``lack of event excitation" and discussed in depth in \cite{Khazaeni:2016jv}. Therefore, this kind of initial trajectory will not be considered in this work.

In this section, we discuss a method for finding these initial trajectories that will always lead to a feasible initial configuration. Note that due to the local nature of our optimization procedure, different initial conditions can lead to different local optima. We, therefore, leverage intuition about the problem to provide reasonable initial solutions with the hope that they will converge to good local optima.

The idea of finding a schedule where all the targets are visited fits naturally into a graph search paradigm, where the targets are modelled as nodes and the edge weights between nodes are the distances between the targets. The problem of finding a feasible schedule can be translated to one of finding $N$ sequences (that represent the schedule of each agent) of nodes where each target belongs to at least one of these sequences. One can add to that a cost function that guides the way in which these sequences are created. A goal that intuitively will lead to reasonable initial solutions is to minimize the distance of the agent that has the longest travel path. This is the well known MTSP (see \cite{BEKTAS2006209} for a good overview of this problem and approaches to solve it). It is worth mentioning that the MTSP is NP-hard, and, therefore, intractable. However, meta-heuristic approaches can provide feasible, though not necessarily optimal, solutions. In this work, we use the genetic algorithm described in \cite{TANG2000267} to find heuristic solutions. This approach is interesting because it finds a feasible solution in the first iteration and refines it as the number of iterations increases. Therefore, one can decide how much computation time to spend, leveraging the tradeoff between optimality and computation effort spent in generating this initial trajectory.

The MTSP problem finds a minimal length cycle and therefore can be immediately converted to parameters that represent one period of the steady state solution. We choose the dwelling times to be initially zero. 

\subsection{1D Simulation Results}
In the simulations, we have chosen to highlight interesting aspects of the solution, rather than simply give an example of the techniques discussed in this paper. We have analyzed a steady state problem with 2 agents and 5 targets. We used the following matrices in the state evolution model
\begin{align*}
    A_i = \begin{bmatrix} -1 & -0.1 \\ -0.1 & 0.01 \end{bmatrix}, \quad Q_i = \textrm{diag}(1,1),
\end{align*}
and the following parameters for the observation model
\begin{align*}
    H_i = \textrm{diag}(1,1), \quad R_i = \textrm{diag}(1,1),\quad r_j=0.9.
\end{align*}
Instead of using the initialization method proposed in Subsec. \ref{subsec:initialization_1d}, we used the following set of parameters:
\begin{align*}
    s_1^0(0)=2.7,\quad s_2(0)=6.8, \quad T^0=6, 
\end{align*}
\begin{align*}
    \tau_1^0=\tau_2^0=0.1[1,0.1,1,1,0.1,1,0.1,1,1,0.1,1],
\end{align*}
\begin{align*}
    \omega_1^0=\omega_2^0=0.0125[1,1,1,1,1,1,1,1,1,1,1].
\end{align*}

The goal of using these initialization parameters was to have both agents share one target in the first iteration of the optimization process and then explore whether or not they would remain sharing the target after the local optimization procedure. The gradient descent step size was set to be constant, $\kappa_0=\kappa_l=0.02$. 

\begin{figure*}[htp!]
    \centering
    \begin{subfigure}[t]{0.32\textwidth}
        \centering\includegraphics[width=\textwidth]{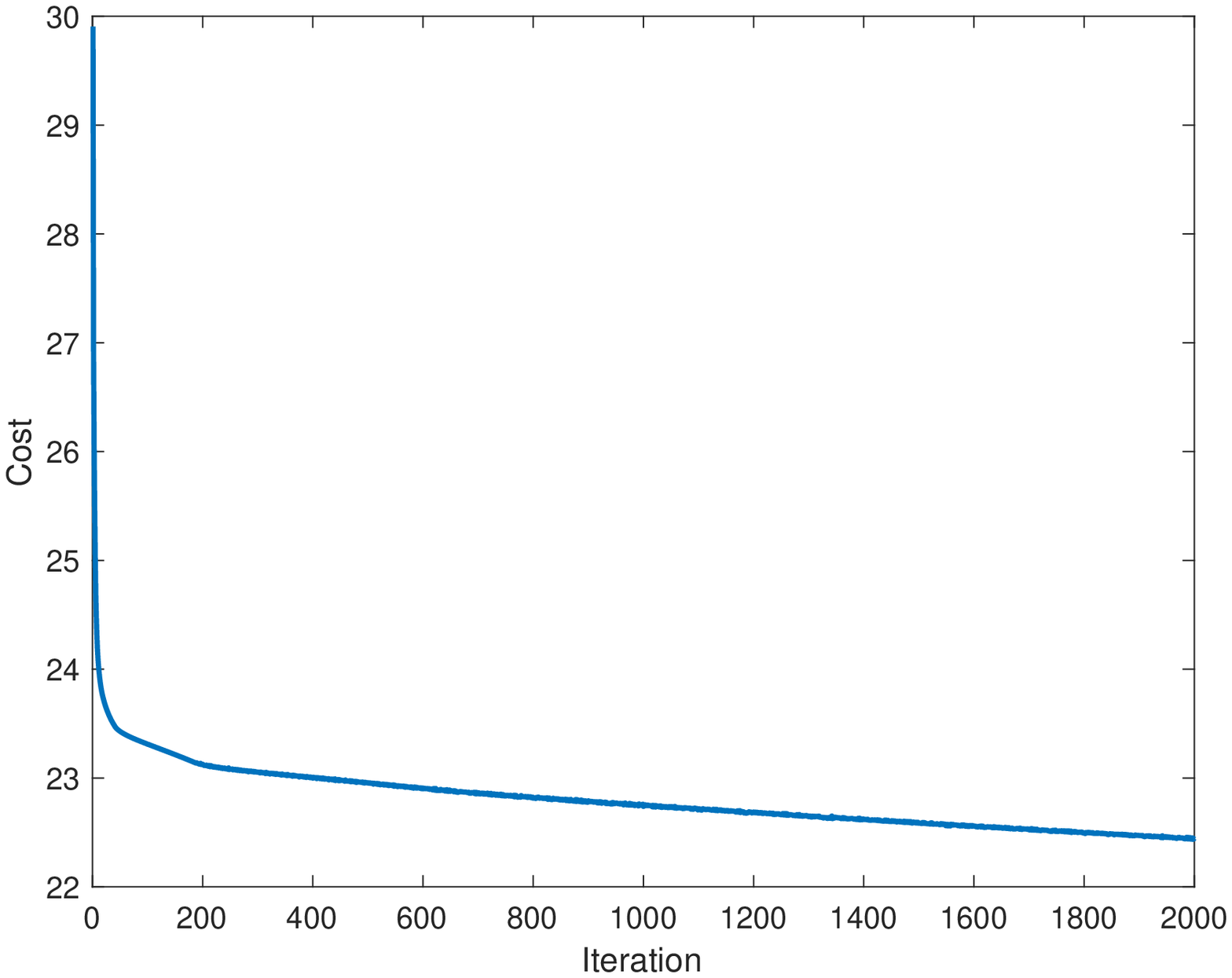}
        \caption{Cost vs. iteration number}
        \label{fig:cost_5_target}
    \end{subfigure}
    \begin{subfigure}[t]{0.32\textwidth}
        \centering\includegraphics[width=\textwidth]{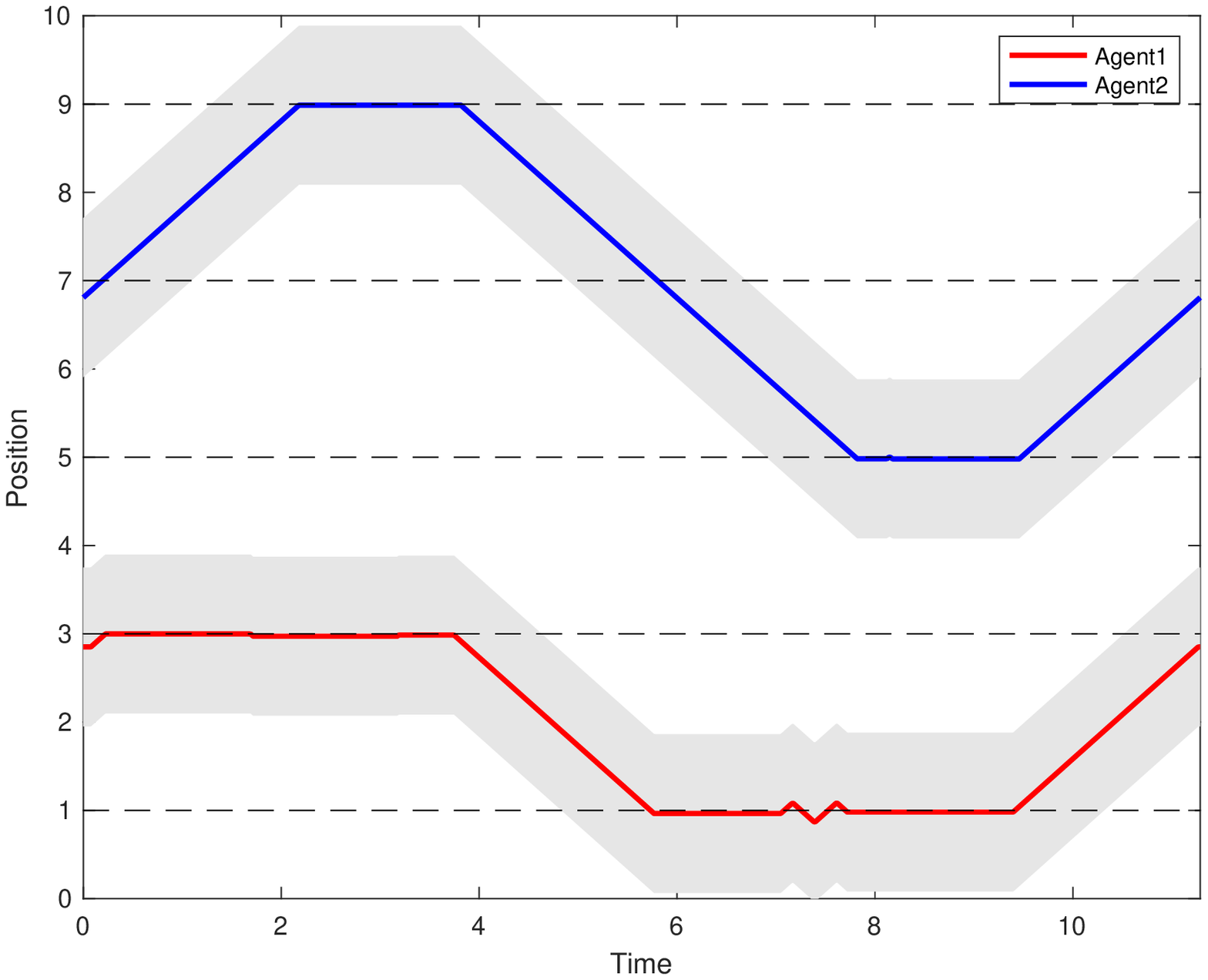}
        \caption{Agent trajectories at final iteration}
        \label{fig:position_5_target}
    \end{subfigure}
    \begin{subfigure}[t]{0.32\textwidth}
        \centering\includegraphics[width=\textwidth]{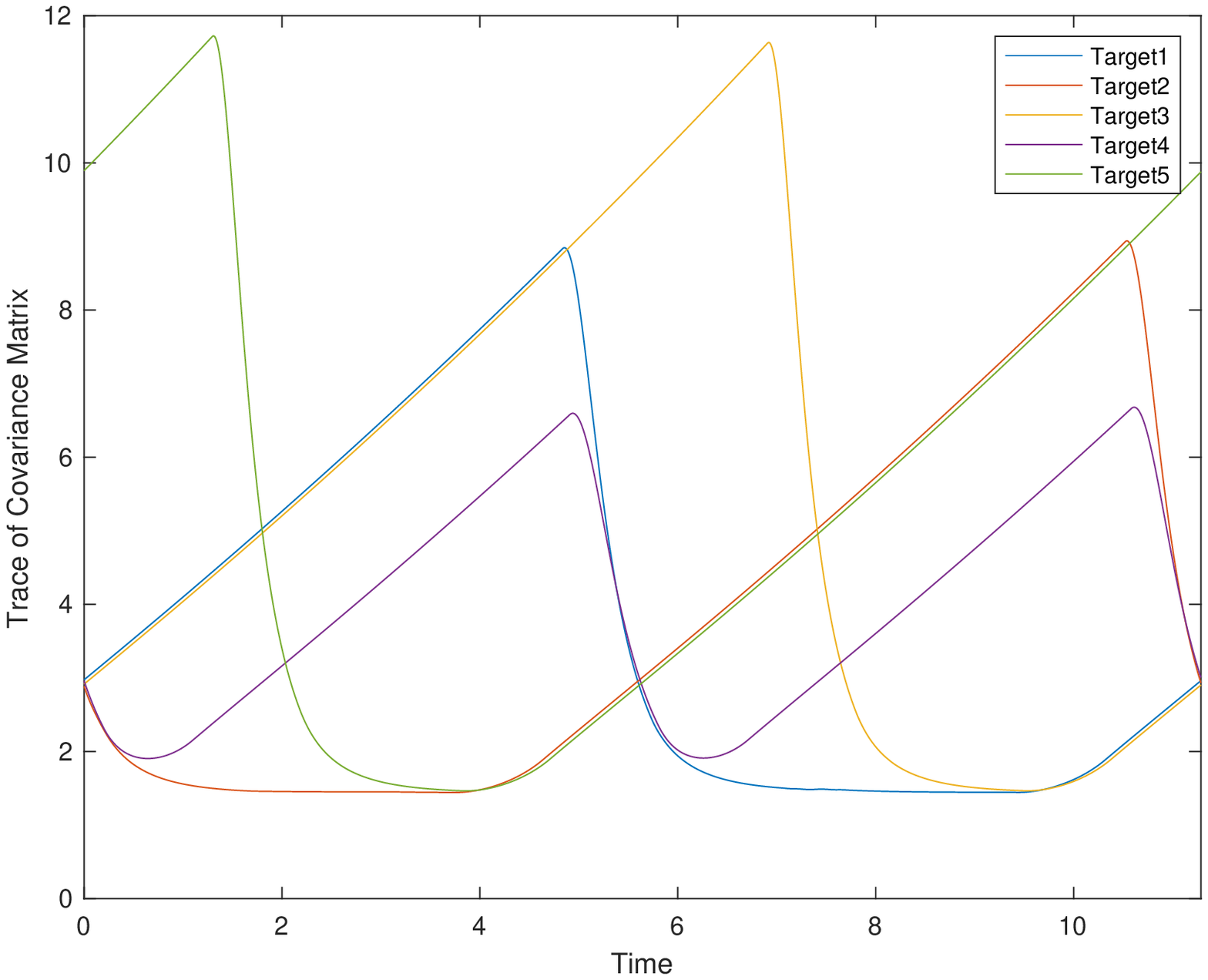}
        \caption{Trace of the covariance for each target}
        \label{fig:covariance_5_target}
    \end{subfigure}
    \caption{Results of a simulation with two agents and five targets. (a) Evolution of the overall cost as a function of iteration number on the gradient descent. (b) Trajectories of the agents at the final iteration. The dashed lines indicate the positions of the targets and the grey shaded area the visibility region of the agent. (c) Evolution of the trace of the estimation covariance matrices of the five targets. }
    \label{fig:results_5_target}
\end{figure*}

Figure \ref{fig:results_5_target} shows the results of the optimization in this scenario. Notice that even though both agents and all the targets have the same dynamic models, the solution at the last iteration of the optimization was such that one of the agents visits three of the targets and the other two of them. {One interesting aspect of the trajectories of the targets in Fig. \ref{fig:position_5_target} is that in the period between times 6 and 8 agent 1 makes a movement with small amplitude around target 1. The effects of this oscillatory movement are hard to notice in the trace of the covariance of target 1 in Fig. \ref{fig:covariance_5_target}, which implies that the difference in performance is negligible. Therefore, even though it is intuitively clear that staying still rather than moving with this oscillatory behavior will lead to a lower cost solution, the difference in terms of cost is minor. Also, notice that the solution has not yet fully converged, as can be seen in Fig. \ref{fig:cost_5_target} and further iterations would remove this small oscillatory behavior.} 

Finally, we point out that while the maximum number of switches in a direction allowed to each agent was set to 11, the final solution appears to have fewer because some of the movement and dwelling times in the final solution are zero.

\section{Fourier Curves for Multi-dimensional Persistent Monitoring with Unbounded Speed}
\label{sec:fourier_curves}
For the 1D case we derived a parameterization with a finite number of parameters of the optimal solution. Unfortunately, the same result does not extend to multi-dimensional persistent monitoring problems. Therefore, instead of looking for an exact representation of the optimal trajectory, we focus on a family of parameterized curves that can approximate very general curves. We pick as an illustration the case where speed is not bounded, in part because the projection operation in line 7 of Alg. \ref{alg:gradient_descent} becomes trivial.
Note that whenever the constant that weights the control effort penalization is not zero, i.e. $\beta\neq0$ as defined in \eqref{eq:cost_function_T}, the fact that the control effort is considered in the total cost 
will not allow the control to be unbounded. An appropriate choice of $\beta$ can provide adequate speed bounds for any given dynamics of the system. As a side note, we highlight that bounded speeds can also be handled in this framework, however the projection operator in the gradient descent optimization becomes more complex. 

Since periodicity is an  essential feature of the steady-state analysis discussed in this work, a natural choice is to use a truncated Fourier series to represent the movement of the agents in each of the coordinates $e_p$, $p=1,...,P$, i.e.
\begin{equation}
    \label{eq:def_fourier_curves}
    {s}_j^{e_p}(q)=s_{j,0}^{e_p}+\sum_{k=1}^K a_{j,k}^{e_p} \sin(2\pi f_k q)+b_{j,k}^{e_p}(\cos(2\pi f_kq)-1),
\end{equation}
where $f_k$ are integer frequencies and, therefore, $s_j^{e_p}(q)$ is periodic with period 1.
The set of parameters that fully characterize all the agents trajectories is $\Theta = \{\{a_{j,k}^{e_p}\},\{b_{j,k}^{e_p}\},\{s_{j,0}^{e_p}\},T\}$, $j=1,..,N$, $p=1,...,P$, $k=1,...,K$. As in the 1D case, in order to compute the derivative of the covariance matrix, we need to give a procedure to compute $\frac{\partial s_k}{\partial \theta}$. For any parameter $\theta\in\Theta$,

\begin{subequations}
\label{eq:trajectory_derivatives_fourier_curves}
\begin{align}
    \frac{\partial s^{e_p}_j}{\partial a_{m,k}^{e_r}} &= \begin{cases}
    \sin(2\pi f_k q),&\ \text{if }j=m\text{ and }p=r,\\
    0,&\ \text{otherwise},
    \end{cases}\\
    \frac{\partial s^{e_p}_j}{\partial b_{m,k}^{e_r}} &= \begin{cases}
    \cos(2\pi f_k q)-1,&\ \text{if }j=m\text{ and }p=r,\\
    0,&\ \text{otherwise},
    \end{cases}\\
    \frac{\partial s^{e_p}_j}{\partial s_{m,0}^{e_r}} &= \begin{cases}
    1,&\ \text{if }j=m\text{ and }p=r,\\
    0,&\ \text{otherwise},
    \end{cases}\\
    \frac{\partial s^{e_p}_j}{\partial T} &= 0.
\end{align}
\end{subequations}

The derivatives in \eqref{eq:trajectory_derivatives_fourier_curves} give enough information to compute the partial derivatives of the steady state covariance matrix as indicated in Prop. \ref{prop:solution_lyapunov_eq}.
In order to compute the gradient of the cost function, the following expression can be used: 
\begin{equation}
    \label{eq:complete_derivative_cost}
    \frac{\partial J}{\partial \theta} = \int_0^1 \sum_{i=1}^N\tr\left(\frac{\partial\Omega_i}{\partial \theta}\right)dq+\beta\frac{\partial}{\partial \theta}\sum_{j=1}^N\int_0^1\norm{\frac{ds_j}{dt}}^2dq.
\end{equation}
Note that
\begin{equation}
    \frac{ds_j}{dq} = T\frac{ds_j}{dt}.
\end{equation}
Using \eqref{eq:def_fourier_curves}, we can compute
\begin{multline}
    \sum_{j=1}^N\int_0^1\norm{\frac{ds_j}{dt}}^2dq=\\\sum_{j=1}^N\sum_{p=1}^P\sum_{k=1}^K\frac{(2\pi f_k)^2}{2T^2}\left(\left(a_{j,k}^{e_p}\right)^2+\left(b_{j,k}^{e_p}\right)^2\right),
\end{multline}
and, therefore,
\begin{subequations}
\label{eq:derivative_total_power}
\begin{align}
    \frac{\partial}{\partial a_{j,k}^{e_p}}\sum_{j=1}^N\int_0^1\norm{\frac{ds_j}{dt}}^2dq&=
    \frac{(2\pi f_k)^2}{2T^2}a_{j,k}^{e_p},\\
    \frac{\partial}{\partial b_{j,k}^{e_p}}\sum_{j=1}^N\int_0^1\norm{\frac{ds_j}{dt}}^2dq&=
    \frac{(2\pi f_k)^2}{2T^2}b_{j,k}^{e_p},\\
    \frac{\partial}{\partial s_{j,0}^{e_p}}\sum_{j=1}^N\int_0^1\norm{\frac{ds_j}{dt}}^2dq&=
    0,
\end{align}
\begin{multline}
    \frac{\partial}{\partial T}\sum_{j=1}^N\int_0^1\norm{\frac{ds_j}{dt}}^2dq=\\
    \sum_{j=1}^N\sum_{p=1}^P\sum_{k=1}^K\frac{-(2\pi f_k)^2}{T^3}\left(\left(a_{j,k}^{e_p}\right)^2+\left(b_{j,k}^{e_p}\right)^2\right).
\end{multline}
\end{subequations}

\subsection{Optimization Initialization}
In the multi-dimensinal optimization, we still use the suboptimal solution of the MTSP problem as a starting point. However, unlike the 1-D scenario with the movement and dwelling time parameterization, the heuristic solution of the MTSP problem cannot be directly converted to a Fourier Curve trajectory.  The solution of the MTSP problem gives, for each agent $j$, a cyclic schedule of targets $\mathcal{S}_j=\{y_j^1,...,y_j^{Y_j},y_j^1\}$ and, therefore, it is still necessary to obtain the parameters $\Theta = \{\{a_{j,k}^{e_p}\},\{b_{j,k}^{e_p}\},\{s_{j,0}^{e_p}\},T\}$ from this schedule. We define $d_j^m$ as the cumulative distance that the agent has traveled when it reaches the $m$-th target in the schedule $\mathcal{S}_j$, and $D_j$ as the total distance traveled by an agent in one cycle. We then look for a feasible truncated Fourier series trajectory such that at the normalized time $q=d_j^m/(D_jT)$, the agent is at a distance lower or equal to the sensing radius (multiplied by a factor $1-\delta$, $0<\delta<1$, in order to give some distance margin) from the target. The position of the agent at the beginning of the cycle is set to be the position of the first target in the schedule $\mathcal{S}_j$.
The period $T$ can be set to any positive number. For each of the agents, the following optimization problem gives a set of feasible $\{a_{j,k}^{e_p}\},\{b_{j,k}^{e_p}\}$.

\begin{mini}  
     {a_{j,k}^{e_p},b_{j,k}^{e_p}}{\sum_{p=1}^P\sum_{k=1}^Kf_k|a_{j,k}^{e_p}|+f_k|b_{j,k}^{e_p}|}{}{ \label{eq:optimization_initial_solution}}
  \addConstraint{\norm{s_j\left(\frac{d_j^m}{D_j}\right)-x_{y_j^m}}_2}{\leq(1-\delta)r_{i,j},}{\ m=1,..,Y_j}
\end{mini}

Note that if we substitute the definition \eqref{eq:def_fourier_curves} into the constraint \eqref{eq:optimization_initial_solution}, this optimization can be formulated as a Quadratically Constrained Program, which is a convex optimization problem that can be solved efficiently. From our experience, minimizing a weighted sum of absolute values in the objective function of \eqref{eq:optimization_initial_solution} has led to smooth initial trajectories. However, other optimization objectives could be used.

It is worth observing that for each of the agents, the trajectory generated by the heuristic solution of the MTSP problem consists of segments of straight lines that visit each of the targets in the schedule $\mathcal{S}_j$. Note that this trajectory, as a function of time, composed by sequence of straight lines can be projected in each of the axis $e_p$ and the projection in that axis will still be a sequence of segments of straight lines. Since piecewise linear functions can be represented by Fourier series, there always exist a $K$ large enough such that there is a solution to \eqref{eq:optimization_initial_solution} because for that $K$ there is a representation of the trajectory that would be close enough to the original MTSP solution such that it is able to satisfy the constraint in \eqref{eq:optimization_initial_solution}. Therefore, we can always find feasible solutions to \eqref{eq:optimization_initial_solution} if we have a MTSP solution.

\subsection{2D Simulation Results}
\label{sec:results}
In this section, we demonstrate the results of the algorithm in two simulated 2D scenarios, one with one agent and three targets and the other one with three agents and 15 targets. All the internal states of the targets have the same state dynamics, evolving according to \eqref{eq:dynamics_phi} with
\begin{align*}
    A_i = \begin{bmatrix} -1 & -0.1 \\ -0.1 & 0.01 \end{bmatrix}, \quad Q_i = \textrm{diag}(1,1),
\end{align*}
and the agent observation models are given by \eqref{eq:observation_model_ij} with
\begin{align*}
    H_i = R_i=\textrm{diag}(1,1), \quad r_j=0.5,\quad \eta=10^{-3}.
\end{align*}

For each of the agents, their trajectories had the first five harmonics in each axis, i.e., $f_k=k$, $k=1,...,5$, $\forall\ j$. In the initial step of the optimization, the period $T$ was set to $1$. The initial coefficients $a_{j,k}^{e_p},b_{j,k}^{e_p}$ were obtained by solving the optimization problem in \eqref{eq:optimization_initial_solution}. The MTSP solution was obtained after $3000$ iterations of the genetic algorithm proposed in \cite{TANG2000267} for solving the associated MTSP. The initial position of each agent was set to coincide with the position of the first target in the solution of the MTSP. A constant descent stepsize $\kappa_l =10^{-4}$ was used in the gradient descent.

In the first scenario (with one agent and three targets), targets were located at positions $x_1=(0,0.5)$, $x_2=(0.5,0)$ and $x_3=(-0.5,0)$. Figures \ref{fig:results_3_target}-\ref{fig:cost_1agent} show the results we obtained. Figure \ref{fig:traj_3_targets} highlights how the trajectory changed from the initial one (an ellipse) to one with an almost triangular shape. Note, however, that not only the geometry of the trajectory is being optimized, but also the speed of the agent along the trajectory. From Fig. \ref{fig:position_3_target} we can see that the agent moves with higher speed when it is not visiting any target and at reduced speed (and the speed even completely vanishes) when it is close to the targets. Also, we can note that the trajectory in the last step of the optimization had a period lower than 1, which was the period on the initial optimization step. The mean estimation error over time for each of the targets is displayed in Fig. \ref{fig:covariance_3_target} and the cost along the optimization process is shown in Fig. \ref{fig:cost_1agent}.

\begin{figure*}[htp!]
    \centering
    \begin{subfigure}{0.32\textwidth}
        \centering\includegraphics[width=\textwidth]{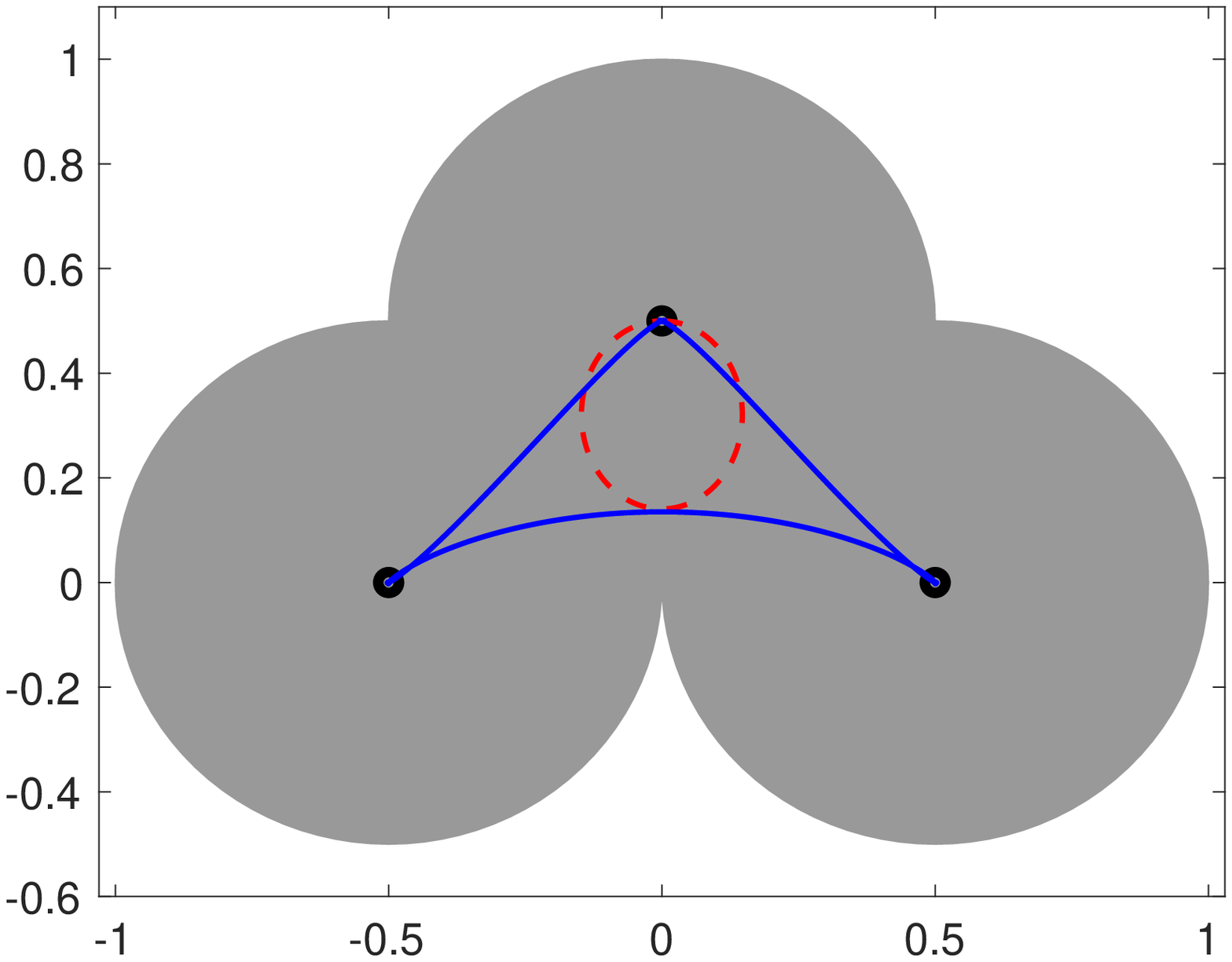}
        \caption{2D trajectory of the agent.}
        \label{fig:traj_3_targets}
    \end{subfigure}
    \begin{subfigure}{0.32\textwidth}
        \centering\includegraphics[width=\textwidth]{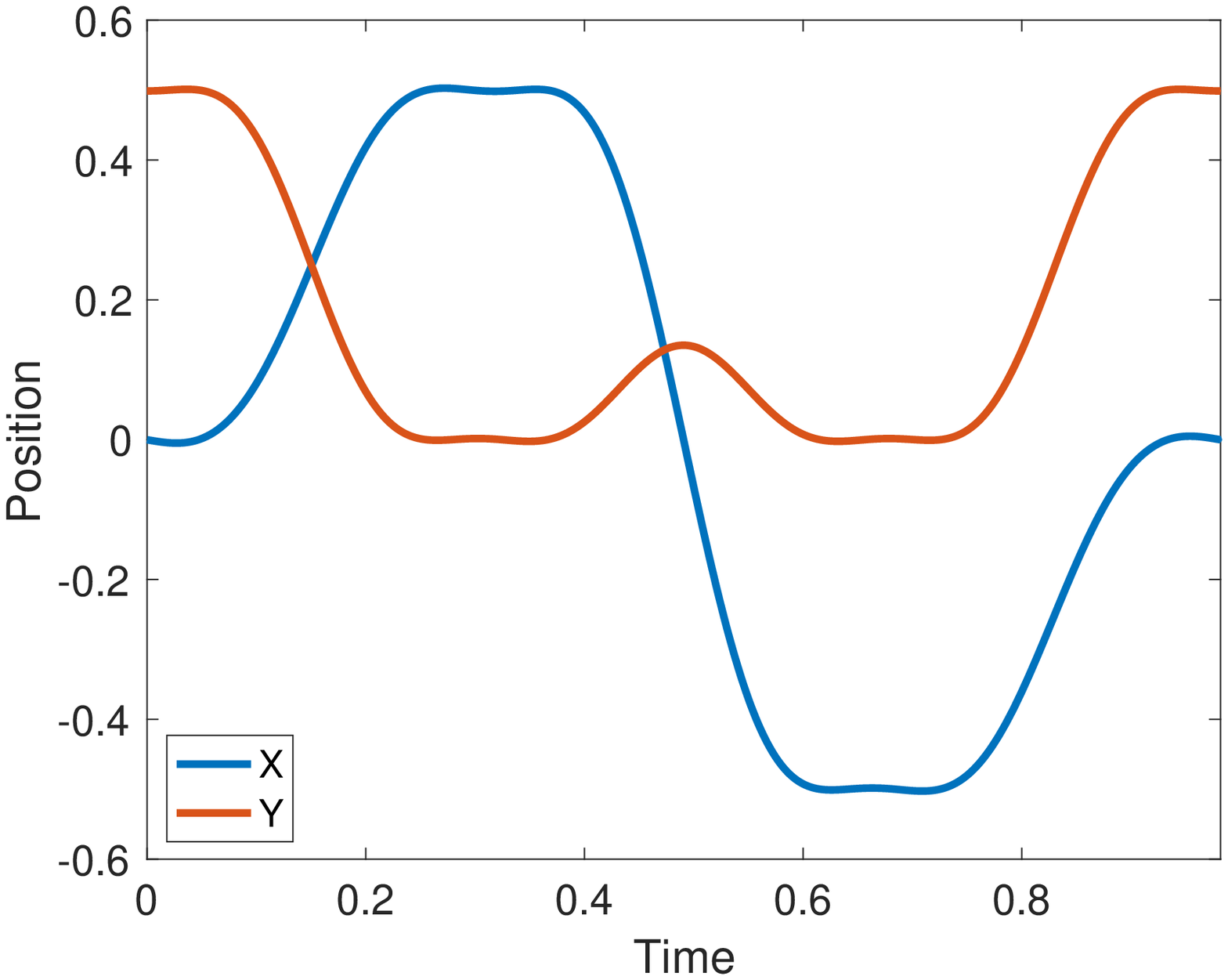}
        \caption{Agent positions as a function of time.}
        \label{fig:position_3_target}
    \end{subfigure}
    \begin{subfigure}{0.32\textwidth}
        \centering\includegraphics[width=\textwidth]{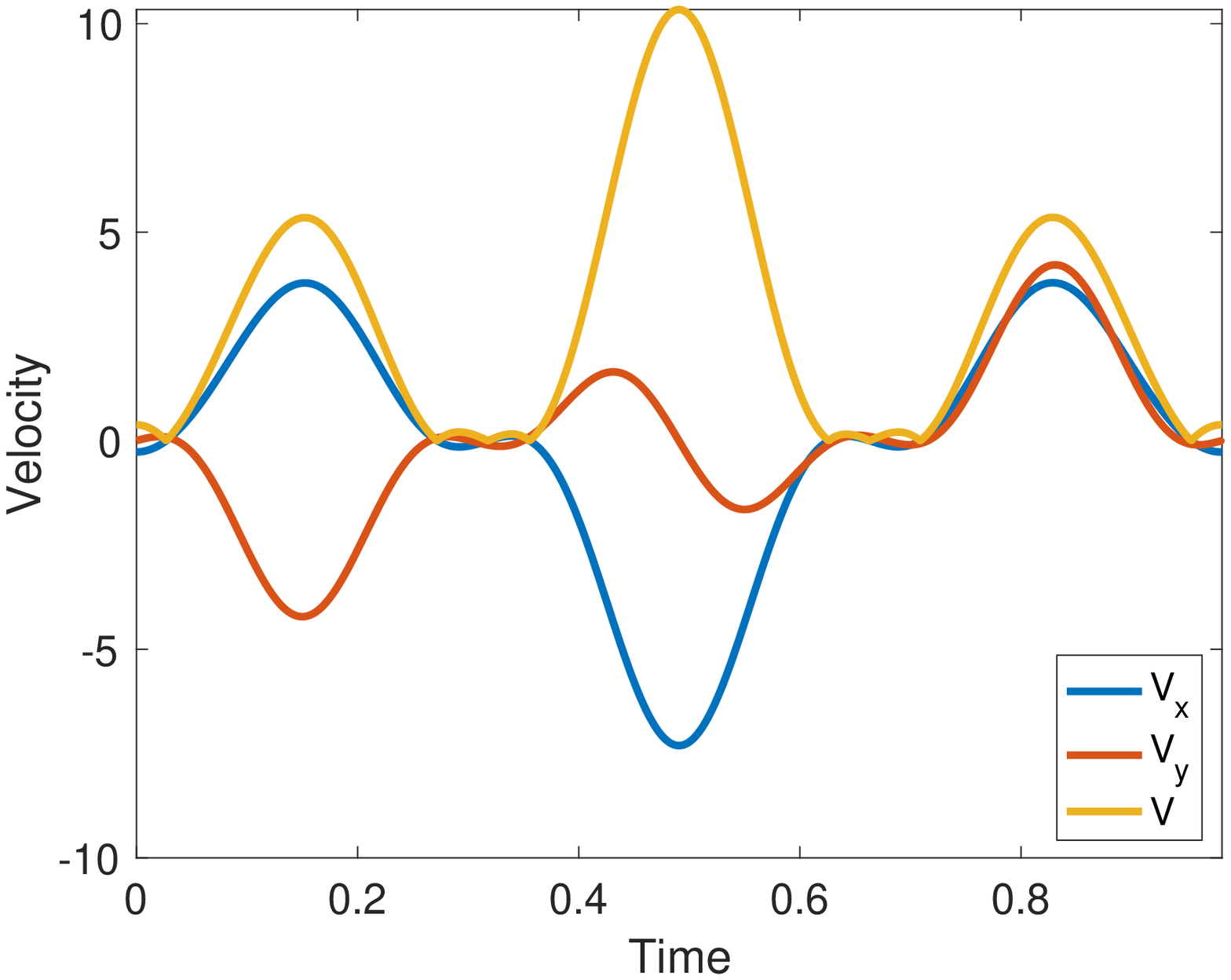}
        \caption{Agent velocity.}
        \label{fig:velocity_3_target}
    \end{subfigure}
    \caption{Simulation results with one agent and three targets. (a) Comparison of the 2-D trajectory of the agents in the initial (red) and final (blue) trajectories. The targets are marked in black and the gray area is the region where an agent can sense that target. (b) Agent trajectories at the final iteration in the $x$ (blue) and $y$ (red) directions. (c) Agent velocities at the final iteration in the $x$ (blue) and $y$ (red) directions and the resulting agent speed (yellow).}
    \label{fig:results_3_target}
\end{figure*}

\begin{figure}[htp!]
    \centering
    \includegraphics[width=0.32\textwidth]{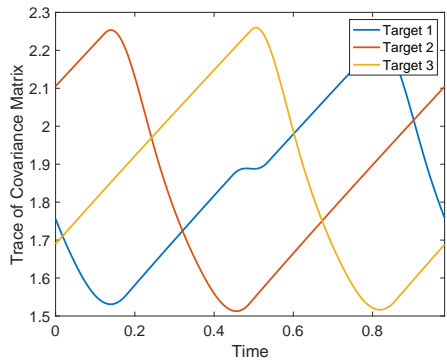}    
    \caption{Trace of the covariance for each target at the final step of the optimization in the scenario with one agent and three targets.}
    \label{fig:covariance_3_target}
\end{figure}

\begin{figure}[htp!]
    \centering
    \includegraphics[width=0.32\textwidth]{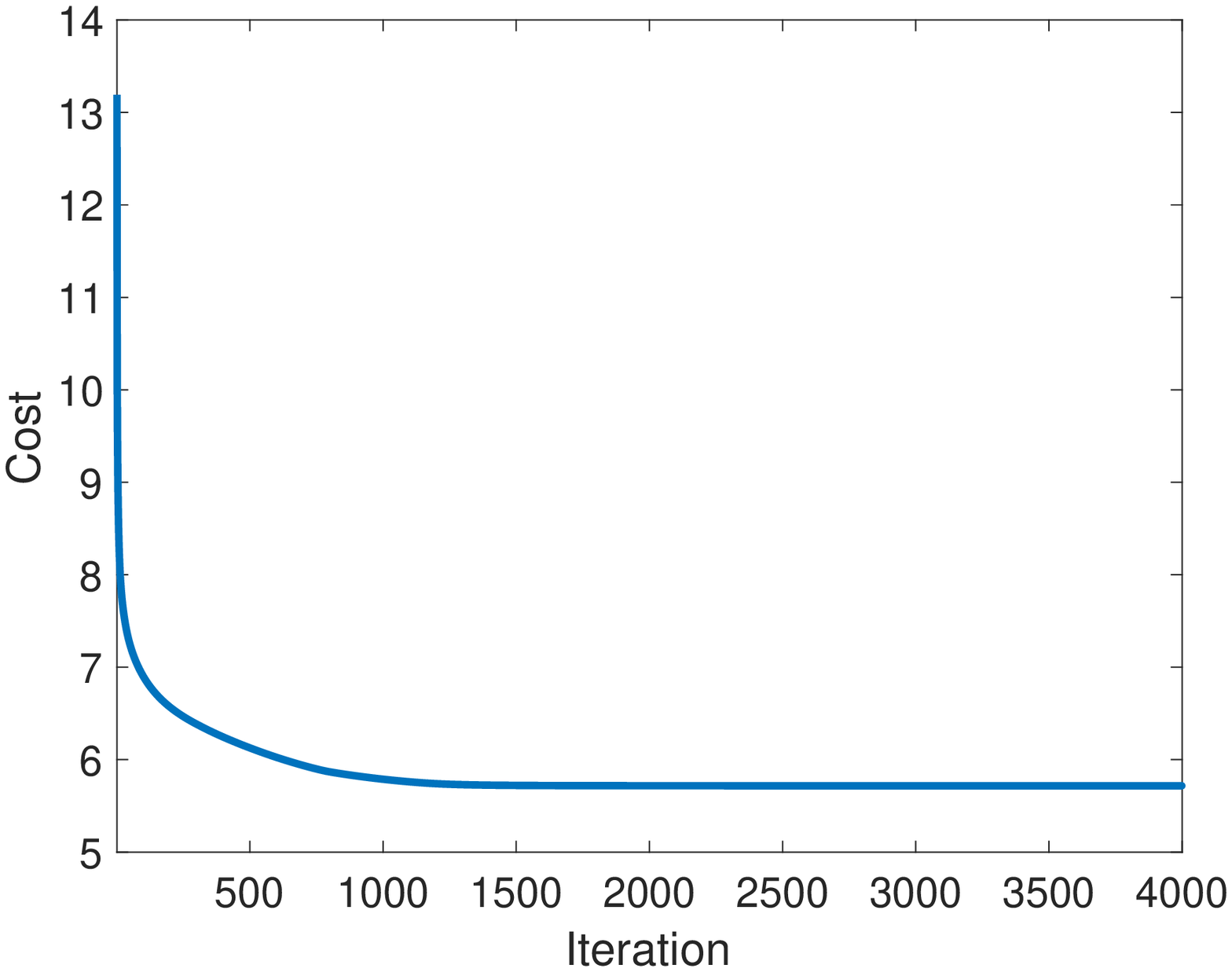}    
    \caption{Evolution of the cost function in the gradient descent optimization in the scenario with 3 targets and 1 agent.}
    \label{fig:cost_1agent}
\end{figure}

\begin{figure}[htp!]
    \centering
    \includegraphics[width=0.32\textwidth]{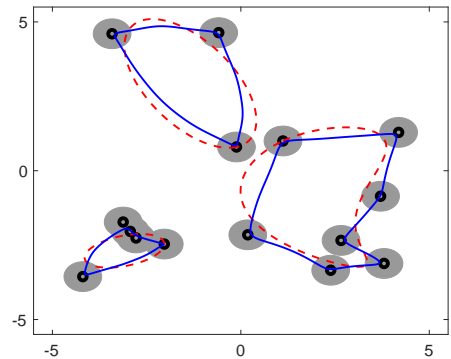}    
    \caption{Trajectories of the targets in the first (red dashed line) and last (blue continuous line) iterations of the gradient descent optimization on the scenario with 15 targets and 3 agents. The target's locations are marked in black and the grey shaded are represent the regions where the target can be sensed by an agent.}
    \label{fig:traj_15agents}
\end{figure}

\begin{figure}[htp!]
    \centering
    \includegraphics[width=0.32\textwidth]{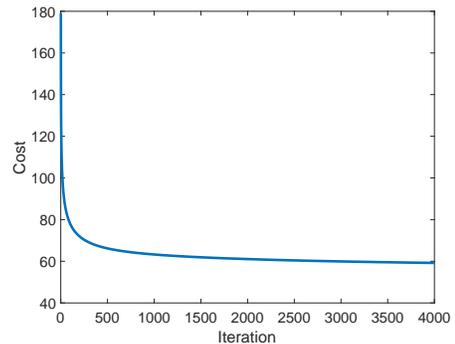}    
    \caption{Evolution of the cost function in the gradient descent optimization in the scenario with 15 targets and 3 agents.}
    \label{fig:cost_15agents}
\end{figure}

In the second scenario, the positions of the targets were generated randomly from independent uniform distributions ranging from $-5$ to $5$ in both axes. Fig. \ref{fig:traj_15agents} compares the trajectories of the agents in the first and last step of the gradient descent optimization, while Fig. \ref{fig:cost_15agents} shows the evolution of the cost as a function of the gradient descent step. The results of the optimization show that the solution of  \eqref{eq:optimization_initial_solution} led to smoother trajectories that still visited all the targets. The gradient descent changed the geometry of the trajectories but did not change the visiting order. As can be observed in Fig. \ref{fig:cost_15agents}, the cost has an abrupt reduction in the beginning of the optimization and then the convergence speed reduces significantly. The optimization process leads to very significant reductions of the cost, reducing it to less than one third of its initial value.

\subsection{3D Simulations Results}

In order to illustrate the extension of techniques proposed in this paper to higher dimensions, we present a result in a 3D environment, with 2 agents and 10 targets. The $A_i,\ Q_i,\ H_i,\ R_i$ matrices and $r_j$ are the same as in the 2D simulations. A constant gradient descent stepsize $\kappa_l=10^{-2}$ was used. The target locations were drawn from a uniform distribution in the cube with coordinates ranging from $[-5,5]$ in each axis. The trajectories after 4000 gradient descent iterations are shown in Fig. \ref{fig:results_3d} and the evolution of the cost is diplayed in Fig. \ref{fig:cost_3D}.

\begin{figure}[htp!]
    \centering
    \includegraphics[width=0.99\columnwidth]{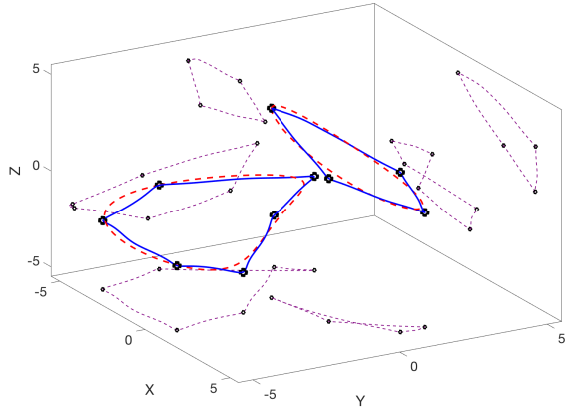}
    \caption{Simulation results in a 3D environment with two targets and ten agents. In red, the initial trajectory in the gradient descent optimization, in blue, the trajectory at the end of the optimization. The projection of the final agent trajectories in three planes is plotted in dashed purple.}
    \label{fig:results_3d}
\end{figure}

\begin{figure}[htp!]
    \centering
    \includegraphics[width=0.32\textwidth]{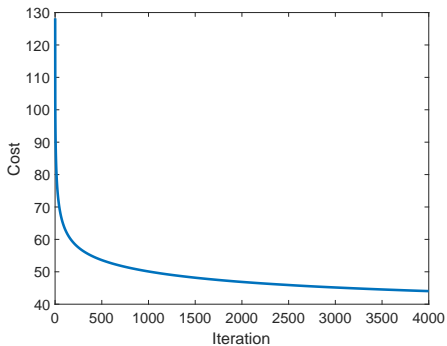}    
    \caption{Evolution of the cost function in the gradient descent optimization in the 3D scenario.}
    \label{fig:cost_3D}
\end{figure}

The 3D results follow a very similar trend of the 2D ones. The trajectories provided by the initialization procedure tend to be smoother, while the shape of the optimized ones are stiffer.

\section{Conclusion and Future Work}
\label{sec:conclusion}
We have addressed the problem of persistent monitoring from an infinite horizon perspective. We used a model that captures internal states of the targets evolving with linear stochastic dynamics and an observation model where the observation quality varies with distance. We derived necessary conditions for the convergence of the covariance matrix to a limit cycle as time goes to infinity. We also provided an algorithm for computing the cost gradient with respect to the parameters that define the trajectory. For a 1D environment, we showed that under some assumptions it is possible to fully characterize an optimal control by a finite set of parameters and used this as a basis for constructing an efficient parameterization. In higher dimensions, we proposed the use of Fourier curves for representing the trajectory. Our simulations illustrated the application of the proposed techniques in 1D, 2D and 3D scenarios, considering finite and infinite horizons for the cost.

Some challenges still remain for the framework presented in this paper. In ongoing research, we are studying how to efficiently select the gradient descent stepsize and also the feasibility and efficiency of local optimization methods other than gradient descent. We intend to study initialization methods that directly use uncertainty (instead of distance) as a criterion for generating initial schedules. We also plan to extend this paradigm to discrete time formulations and to investigate the feasiblity of distributed solutions. Lastly, we plan to study this problem when targets can also move, including movement models that are not fully deterministic.

\bibliographystyle{IEEEtran}
\bibliography{references.bib}

\appendices
\section{Proof of Optimality of Kalman Bucy Filter on the Persistent Monitoring Problem with Uncertain States}
\label{ap:proof_kbfilter}
\label{sec:optimality_kf_proof}
The set of all unbiased estimators $\hat{\phi}_i(t)$ of $\phi_i(t)$, as discussed in Sec. IV of \cite{athans1967direct}, is:
\begin{equation}
    \label{eq:app_dynamics_mean}
    \dot{\hat{\phi}}_i(t) = \left(A_i-G_i(t)\tilde{H}_i(t)\right)\hat{\phi}_i(t)+G_i(t)\tilde{z}_i(t),
\end{equation}
with $E[\hat{\phi}_i](0)=E[{\phi}_i(0)]$ and $G(t)$ a gain function that should be considered an input for the sake of optimality analysis. If $\Omega_i(t)=E[e_i(t)e_i'(t)]$, where $e_i=\hat{\phi}_i(t)-{\phi}_i(t)$, then
\begin{equation}
\begin{aligned}
    \label{eq:app_dynamics_covariance}
    \dot{\Omega}_i(t)=& \left(A_i-G_i(t)\tilde{H}_i(t)\right){\Omega}_i(t)+G_i(t)\tilde{R}_iG_i'(t)\\&+Q_i+{\Omega}_i(t)\left(A_i'-\tilde{H}_i(t)'G_i'(t)\right)
\end{aligned}
\end{equation}
and $\Omega_i(0)=\Omega_{i,0}$. Defining the following cost:
\begin{equation}
    J=\int_0^{t_f} \left(\sum_{i=1}^M\tr\left(\Omega_i(t')\right)+\beta u'(t')u(t') \right) \dt'
\end{equation}
The Hamiltonian is then
\begin{multline}
    \mathcal{H} = \sum_{i=1}^M\tr\left(\Omega_i(t)\right) + \beta u'(t)u(t) \\ +\sum_{i=1}^M\tr\left({\Gamma_i(t)\dot{\Omega}_i(t)}\right)+\sum_{j=1}^N\alpha_j(r)s_j(t),
\end{multline}
{where $\Gamma_i$ is the costate of $\Omega_i$. Using Pontryagin's minimum principle, 
at an optimal trajectory, since $G_i$ is unconstrained, we have
\begin{equation}
\label{eq:app_pmp_gain}
\frac{\partial \mathcal{H}^\star}{\partial G_i}=0.    
\end{equation}
Substituting the dynamics of the covariance matrix \eqref{eq:app_dynamics_covariance} on \eqref{eq:app_pmp_gain}, we get
\begin{equation}
    \label{eq:app_algebraic_hamilton_eq}
    -\Gamma_i\Omega_i\tilde{H}_i'-\Gamma_i'\Omega_i\tilde{H}_i'+\Gamma_i'G_i\tilde{R}_i+\Gamma_iG_i\tilde{R}_i=0.
\end{equation}
Now, again from the minimum principle,
\begin{multline}
        \dot{\Gamma}_i = -\frac{\partial \mathcal{H}}{\partial \Omega_i}-(A_i-G_i\tilde{H}_i)'\Gamma_i-\Gamma_i(A_i-G_i\tilde{H}_i)-I.
\end{multline}
Since $\Gamma_i(t_f)=0$ due to the boundary conditions of Pontryagin's minimum principle, the symmetric nature of this ODE allow us to see that $\Gamma_i$ will be symmetric for $t\in[0,t_f]$. Moreover, note that the ODE is linear and the single non-homogeneous term is -I. Since $\Gamma_i(t_f)=0$,
\begin{equation}
\begin{aligned}
    \Gamma_i(t)=&-\int_{t_f}^{t} \Phi'(t,t_f)\Phi(t,t_f) dt, \\ \Phi(a,b) =& \exp\left(\int_a^b {(A_i-G(\beta)\tilde{H}_i(\beta))}d\beta\right).
\end{aligned}
\end{equation}
This implies that $\Gamma_i(t)\succ0$ for $t\in[0,t_f)$.} Therefore, since $\Gamma_i(t)$ is invertible and symmetric, Eq. \eqref{eq:app_algebraic_hamilton_eq} can be reduced to
\begin{equation}
    \Omega_i\tilde{H}_i'+\Omega_i'\tilde{H}_i'=2G_i\tilde{R}_i.
\end{equation}
Since the covariance matrix $\Omega_i$ is also symmetric,
\begin{equation}
    G_i(t) = \Omega_i(t)\tilde{H}_i(t)\tilde{R}^{-1}_i(t)
\end{equation}
Plugging in this expression on \eqref{eq:app_dynamics_covariance} and \eqref{eq:app_dynamics_mean}, we get the usual Kalman-Bucy filter equations, which along with the initial conditions $\Omega_i(0)=\Omega_{i,0}$ and $\hat{\phi}_i(0)=E[{\phi}_i(0)]$, have unique solutions.
    
\section{Existence of Steady State Covariance Derivatives}
\label{ap:existence_derivatives}
{In this appendix, we discuss the existence of the gradients of the steady state covariance matrix. Note that, if in a periodic trajectory $\eta_i(q)=0\ \forall q\in[0,1]$ {(i.e., target $i$ is never visited)}, the existence of the steady state covariance matrix is not guaranteed by Prop. \ref{prop:unique_attractive_sol_riccati_eq}. Obviously, if the steady state covariance does not exist, its derivative will also not exist. This illustrates the fact that the existence $\frac{\partial \bar{\Omega}_i}{\partial \theta}$ is not guaranteed.  What we show in this appendix is that, under very natural assumptions, the derivative $\frac{\partial \bar{\Omega}_i}{\partial \theta}$ exists for the parameters that belong to the interior of the set of parameters that will lead to convergence of the steady state covariance, except for a set of zero measure.}

{
Since here we analyze the behavior of the steady state covariance with respect to parameter variations, we will use a notation that explicitly shows the dependence of the variables with the parameters. For example, $\bar{\Omega}_i$ is a function of $q$ and of the parameters $\Theta$ and, hence, it will be denoted as $\bar{\Omega}_i(q;\Theta)$.}


{
We define the set of parameters for which the steady state covariance is guaranteed to exist as:
\begin{multline}
    \vartheta = \{\Theta\ |\ \eta_i(q,\tilde{\Theta})>0\\\text{ for some non-degenerate interval $q\in[a,b]$}\},
\end{multline}
and $\Psi$ as the interior of the set $\vartheta$.}

{Our goal is to show that, for any $\Theta\in\Psi$, the partial derivatives $\frac{\partial\bar{\Omega}_i(q;\Theta)}{\partial \theta_d}$ exist locally. From Prop. \ref{prop:solution_lyapunov_eq}, we know that, when this partial derivative exists, it is equal to $\Sigma(q;\Theta)$. We also know that $\Sigma(q;\Theta)$ is well defined for any $\theta\in\Psi$. We now make the following assumption about the regularity of $\Sigma$:}
\begin{assumption}
    $\Sigma(q;\Theta)$ is locally Riemann integrable for $\Theta\in\Psi$.
\end{assumption}
{
In light of Proposition \ref{prop:solution_lyapunov_eq}, Assumption 3 means that the parameterizations that we consider do not allow for an infinite number of discontinuities of $\Sigma_h(q;\Theta)$ and $\Sigma_{ZI}(q;\theta)$. Note that, due to the linear nature of their underlying differential equations, $\Sigma_h(q;\Theta)$ and $\Sigma_{ZI}(q;\theta)$ are bounded for any $\Theta\in\Psi.$ Therefore, $\Sigma(q;\Theta)$ is also bounded.}

\begin{proposition}
    {Under Assumptions 1, 2 and 3, the partial derivative $\frac{\partial\bar{\Omega}_i(q;\Theta)}{\partial \theta_d}$, $q\in[0,1]$ and $\Theta\in\Psi$, exists almost everywhere in $[0,1]\times\Psi$.}
\end{proposition}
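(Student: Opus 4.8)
The plan is to characterize $\bar\Omega_i(q;\Theta)$ as the unique fixed point of the time-one flow (Poincar\'e map) of the scaled Riccati equation \eqref{eq:scaled_riccati_diff_eq}, to transfer the non-degeneracy already proved in Prop. \ref{prop:solution_lyapunov_eq} into Lipschitz dependence of this fixed point on the parameters, and then to conclude almost-everywhere differentiability from a one-dimensional Lebesgue argument combined with Tonelli's theorem. Concretely, let $F(\Omega_0;\Theta)$ denote the value at $q=1$ of the solution of \eqref{eq:scaled_riccati_diff_eq} started from $\Omega_0$ at $q=0$; by Prop. \ref{prop:unique_attractive_sol_riccati_eq} the steady-state covariance is the unique solution of $F(\Omega_0;\Theta)=\Omega_0$, so that $\bar\Omega_i(0;\Theta)$ is determined implicitly by this equation.

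First I would verify the two hypotheses needed to apply an implicit-function argument to $F(\Omega_0;\Theta)-\Omega_0=0$. The map $F$ is smooth in $\Omega_0$ (the right-hand side of \eqref{eq:scaled_riccati_diff_eq} is polynomial in $\bar\Omega_i$), and its linearization in $\Omega_0$ along the periodic orbit is exactly the variational flow whose vectorized time-one transition is $\Sigma_H(1)\otimes\Sigma_H(1)$. In the proof of Prop. \ref{prop:solution_lyapunov_eq} we showed that all eigenvalues of $\Sigma_H(1)$ lie strictly inside the unit disk, hence $1$ is not an eigenvalue of $\Sigma_H(1)\otimes\Sigma_H(1)$ and the operator $\partial_{\Omega_0}F-I$ is invertible for every $\Theta\in\Psi$; this is precisely the non-degeneracy furnished for free by the earlier proposition. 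The dependence of $F$ on $\Theta$ enters only through $T$ and through $\eta_i(\cdot;\Theta)=\sum_j\gamma_{i,j}^2(s_j(\cdot;\Theta)-x_i)$; since $\gamma_{i,j}^2$ in \eqref{eq:model_gamma} is globally Lipschitz and $s_j(\cdot;\Theta)$ is smooth in the parameters, $F$ is locally Lipschitz in $\Theta$, uniformly on compact subsets of $\Psi$. A Lipschitz implicit-function/contraction argument then yields that $\Theta\mapsto\bar\Omega_i(0;\Theta)$ is locally Lipschitz, and propagating this initial condition forward through the parameter-Lipschitz Riccati flow shows that $\Theta\mapsto\bar\Omega_i(q;\Theta)$ is locally Lipschitz, with a Lipschitz constant uniform in $q\in[0,1]$.

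With uniform Lipschitz regularity in hand, I would conclude as follows. For each fixed $q$ and each fixed value of the remaining coordinates, the scalar map $\theta_d\mapsto\bar\Omega_i(q;\Theta)$ is entrywise Lipschitz, hence differentiable for Lebesgue-almost every $\theta_d$ by the classical one-dimensional theorem for absolutely continuous functions. Letting $\mathcal{N}\subseteq[0,1]\times\Psi$ be the measurable set where $\partial\bar\Omega_i(q;\Theta)/\partial\theta_d$ fails to exist, this shows that every $\theta_d$-slice of $\mathcal{N}$ is null; Tonelli's theorem then upgrades this to $\mathcal{N}$ being null in $[0,1]\times\Psi$. Finally, wherever the derivative does exist it must coincide with $\Sigma(q;\Theta)$ by Prop. \ref{prop:solution_lyapunov_eq}, and Assumption 3 guarantees that this almost-everywhere derivative is locally Riemann integrable, which is what makes it usable in the cost-gradient formula \eqref{eq:derivative_steady_state_cost}.

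I expect the main obstacle to be the uniform Lipschitz step: one must control the fixed point of the Poincar\'e map across parameter variations, which is where the strict-contraction property $\rho(\Sigma_H(1))<1$ from Prop. \ref{prop:solution_lyapunov_eq} is essential, and one must handle configurations in which an agent dwells on a sensing boundary, where $\eta_i$ loses smoothness on a set of times of positive measure. It is exactly these degeneracies that Assumption 3 excludes, through the requirement that $\Sigma(\cdot;\Theta)$ have a null discontinuity set, so that the Lipschitz-plus-Tonelli machinery can be applied without the slices of $\mathcal{N}$ swelling to positive measure.
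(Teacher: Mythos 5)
Your proof is correct in its essentials but follows a genuinely different route from the paper's. The paper never touches the Poincar\'e map: it defines the candidate $\Upsilon(q;\Theta_2)=\bar\Omega_i(q;\Theta_1)+\int_0^1\Sigma\bigl(q;\Theta_1+\xi(\Theta_2-\Theta_1)\bigr)d\xi$, checks that $\Upsilon$ is $1$-periodic and satisfies the scaled Riccati equation \eqref{eq:scaled_riccati_diff_eq}, and invokes uniqueness of the periodic solution to conclude $\Upsilon=\bar\Omega_i(\cdot\,;\Theta_2)$; almost-everywhere differentiability then follows because $\Sigma$ appears as the integrand in a fundamental-theorem-of-calculus representation of the parameter dependence, which is exactly where Assumption 3 (integrability of $\Sigma$ along parameter segments) is consumed. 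You instead re-use the spectral estimate on $\Sigma_H(1)$ from Prop. \ref{prop:solution_lyapunov_eq} to make the time-one map a local contraction, obtain local Lipschitz dependence of its fixed point (hence of $\bar\Omega_i(q;\cdot)$) on $\Theta$, and finish with the one-dimensional Rademacher theorem plus Tonelli. What your route buys is independence of the existence statement from Assumption 3, together with quantitative (Lipschitz) regularity of $\Theta\mapsto\bar\Omega_i(q;\Theta)$, which the paper's argument does not deliver; what it costs is an extra hypothesis the paper never states, namely that $s_j(q;\cdot)$ and $T(\cdot)$ depend locally Lipschitz-continuously on $\Theta$, uniformly in $q$. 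This holds for both parameterizations actually used in the paper, but it does not follow from Assumptions 1--3, so strictly speaking you prove a variant of the proposition under a different regularity hypothesis.

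Two smaller points need repair. First, Prop. \ref{prop:unique_attractive_sol_riccati_eq} does not say that $\bar\Omega_i(0;\Theta)$ is \emph{the} unique fixed point of your map $F(\cdot\,;\Theta)$; it gives uniqueness only within the class of non-negative stabilizing periodic solutions. Your contraction argument furnishes only local uniqueness, so you must still identify the locally unique fixed point with the steady-state covariance: note that it is positive definite (being close to $\bar\Omega_i(0;\Theta_0)$), so the periodic solution it generates is attracted to $\bar\Omega_i(\cdot\,;\Theta)$ by Prop. \ref{prop:unique_attractive_sol_riccati_eq}, and a periodic solution attracted to a periodic solution must coincide with it. Second, your closing claim about Assumption 3 is misplaced: the Lipschitz--Rademacher--Tonelli machinery needs no integrability of $\Sigma$ whatsoever, and boundary-dwelling configurations do not threaten the null-slice argument. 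Where Assumption 3 (or something playing its role) genuinely enters your route is only at the end, in guaranteeing that $\Sigma(q;\Theta)$ is well defined and integrable, so that the a.e.\ derivative, identified with $\Sigma$ via Prop. \ref{prop:solution_lyapunov_eq}, can actually be used in the gradient formula \eqref{eq:derivative_steady_state_cost}.
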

\begin{proof}
{By construction, we pick two parameter sets $\Theta_1$ and $\Theta_2$, such that any convex combination of $\Theta_1$ and $\Theta_2$ belongs to $\Psi$. Additionally, since our goal is to compute the partial derivative with respect to $\theta_d$, we pick $\Theta_2$ such that it differs from $\Theta_1$ only in its $d$-th coordinate. Since the set $\Psi$ is open, if we pick any $\Theta_1\in\Psi$, we can always find a $\Theta_2$ that fullfills the aforementioned properties.}

{We define the function $\Upsilon(q;\Theta_2)$ (which later we will show $\Upsilon(q;\Theta_2)=\bar{\Omega}_i(q;\Theta_2)$) as:
\begin{equation}
    \label{eq:fundamental_calculus_integral}
    \Upsilon(q;\Theta_2)=\bar{\Omega}_i(q;\Theta_1)+\int_0^1\Sigma(q;\Theta_1+\xi(\Theta_2-\Theta_1))d\xi.
\end{equation}
Note that, if $\Upsilon(q;\Theta_2)=\bar{\Omega}_i(q;\Theta_2)$ for generic $\Theta_1,\Theta_2$, then $\Sigma(q;\Theta)=\frac{\partial\bar{\Omega}_i(q;\Theta)}{\partial \theta_d}$ almost everywhere, since $\Sigma(q;\Theta)$ plays the role of a partial derivative in Eq. \eqref{eq:fundamental_calculus_integral}.}

{
$\bar{\Omega}_i(q;\Theta_2)$ is uniquely defined by satisfying the differential equation \eqref{eq:scaled_riccati_diff_eq} and being periodic with period one. We then show that $\Upsilon(q,\Theta_2)$ also satisfies both of these properties, which imply that indeed $\Upsilon(q,\Theta_2)=\bar{\Omega}_i(q;\Theta_2)$.}

{
First, notice that $\Upsilon(0;\Theta_2)=\Upsilon(1;\Theta_2)$ since $\bar{\Omega}_i(0;\Theta_1)=\bar{\Omega}_i(1;\Theta_1)$ and $\Sigma(0,\Theta)=\Sigma(1,\Theta)$, for any $\Theta\in\Psi$. Also, since $\Sigma(q;\Theta)$ is a solution of $\eqref{eq:derivative_omega_i}$,
\begin{multline}
    \label{eq:integral_sigma_dot_along_path}
    \int_0^1\dot{\Sigma}(q;\Theta_1+\xi(\Theta_2-\Theta_1))d\xi = \dot{\bar{\Omega}}_i(q,\Theta_2)-\dot{\bar{\Omega}}_i(q,\Theta_1).
\end{multline}
Therefore, taking the derivative of \eqref{eq:fundamental_calculus_integral} with respect to $q$ and substituting \eqref{eq:integral_sigma_dot_along_path}, we get
\begin{equation}
    \dot\Upsilon(q,\Theta_2) = \dot{\bar{\Omega}}_i(q,\Theta_2).
\end{equation}
Hence we conclude that $\Upsilon(q,\Theta_2)=\bar{\Omega}_i(q;\Theta_2)$, and, as a consequence, $\frac{\partial\bar{\Omega}_i(q;\Theta)}{\partial \theta_d}$ exists almost everywhere in $\Psi$. Additionally, as already stated in Prop. \ref{prop:solution_lyapunov_eq}, $\frac{\partial\bar{\Omega}_i(q;\Theta)}{\partial \theta_d}=\Sigma(q,\Theta)$ wherever it exists.}
\end{proof}

\end{document}